\newenvironment{sequation}{\begin{equation}\small}{\end{equation}}
\newtheorem{lemma}{\textbf{Lemma}}
\newtheorem{theorem}{\textbf{Theorem}}
\newtheorem{definition}{\textbf{Definition}}
\newtheorem{corollary}{\textbf{Corollary}}
\definecolor{b}{rgb}{1, 1, 1}
\definecolor{r}{rgb}{1, 1, 1}
\def\BibTeX{{\rm B\kern-.05em{\sc i\kern-.025em b}\kern-.08emT\kern-.1667em\lower.7ex\hbox{E}\kern-.125emX}}
\begin{document}

\title{A Two Time-Scale Joint Optimization Approach for UAV-assisted MEC}

\author{\IEEEauthorblockN{Zemin Sun\IEEEauthorrefmark{2},
		Geng Sun\IEEEauthorrefmark{2}\IEEEauthorrefmark{1},  
        Long He\IEEEauthorrefmark{2},
        Fang Mei\IEEEauthorrefmark{2},
	  Shuang Liang\IEEEauthorrefmark{3},
	  Yanheng Liu\IEEEauthorrefmark{2}
	}
\IEEEauthorblockA{\IEEEauthorrefmark{2}College of Computer Science and Technology, Jilin University, Changchun 130012, China} 
 \IEEEauthorblockA{\IEEEauthorrefmark{3}School of Information Science and Technology, Northeast Normal University, Changchun 130012, China} 
 E-mails: \{sunzemin, sungeng\}@jlu.edu.cn, helong0517@foxmail.com, meifang@jlu.edu.cn, \\ liangshuang@nenu.edu.cn, yhliu@jlu.edu.cn\\
\IEEEauthorrefmark{1}Corresponding author: Geng Sun}

\maketitle

\begin{abstract}		
	Unmanned aerial vehicles (UAV)-assisted mobile edge computing (MEC) is emerging as a promising paradigm to provide aerial-terrestrial computing services close to mobile devices (MDs). However, meeting the demands of computation-intensive and delay-sensitive tasks for MDs poses several challenges, including the demand-supply contradiction between MDs and MEC servers, the demand-supply heterogeneity between MDs and MEC servers, the trajectory control requirements on energy efficiency and timeliness, and the different time-scale dynamics of the network. To address these issues, we first present a hierarchical architecture by incorporating terrestrial-aerial computing capabilities and leveraging UAV flexibility. Furthermore, we formulate a joint computing resource allocation, computation offloading, and trajectory control problem to maximize the system utility. Since the problem is a non-convex mixed integer nonlinear programming (MINLP), we propose a two time-scale joint computing resource allocation, computation offloading, and trajectory control (TJCCT) approach. In the short time scale, we propose a price-incentive method for on-demand computing resource allocation and a matching mechanism-based method for computation offloading. In the long time scale, we propose a convex optimization-based method for UAV trajectory control. Besides, we prove the stability, optimality, and polynomial complexity of TJCCT. Simulation results demonstrate that TJCCT outperforms the comparative algorithms in terms of the utility of the system, the QoE of MDs, and the revenue of MEC servers.

\end{abstract}


\maketitle

%
%

\section{Introduction}
\label{sec_introduction}

\par  \IEEEPARstart{T}{he} development of wireless technologies and the proliferation of mobile devices (MDs) trigger various emerging applications, 
such as real-time video analysis~\cite{hou2023eavs}, online gaming, and augmented reality~\cite{Xu2023learn}. These applications often require extensive computing resources and low latency for the quality of experience (QoE)~\cite{dai2022bloom}. However, fulfilling the computation-intensive and delay-sensitive computation tasks of these applications poses a great challenge to MDs with insufficient computing capability. To tackle this challenge, mobile edge computing (MEC) has been identified as a promising technology to meet the stringent requirements of these applications. By offloading the computation-intensive tasks to proximate MEC servers, the QoE of MDs can be significantly enhanced in a cost-effective and energy-efficient way \cite{li2023tapfinger,sun2023bar}. However, due to the dependence on terrestrial infrastructures and the environment, conventional terrestrial MEC servers are limited by the high cost of deployment, low adaptability to the network dynamic, and fixed service range. 

\par Recent years have seen a paradigm shift from terrestrial edge computing toward aerial-terrestrial edge computing, i.e., UAV-assisted MEC~\cite{qu2023elastic,dong2021uavs}. With high maneuverability, UAVs could be rapidly and flexibly deployed as aerial MEC servers to assist the terrestrial MEC servers in providing temporary computing services whenever and wherever needed. Furthermore, the high-probability line-of-sight (LoS) link of UAVs can improve the communication reliability and network capacity of the terrestrial MEC networks~\cite{Li2023TMC,Qu2021}.

\par Despite the aforementioned benefits, the UAV-assisted MEC is facing some unprecedented challenges. \textit{\textbf{i) Demand-Supply Contradiction for Resource Allocation.}} Compared to the cloud, MEC servers have limited computing capabilities, particularly for aerial MEC servers with constrained carrying capacity. However, the computation tasks of MDs are often computation-hungry and latency-sensitive. This demand-supply contradiction between the limited computing resources of MEC servers and the stringent requirement of MDs poses a challenge for efficient computing resource allocation. \textit{\textbf{ii) Demand-Supply Heterogeneity for Computation Offloading.}} Different computation tasks of MDs have diverse requirements on computing resources, while different MEC servers possess varying computing capabilities. This demand-supply heterogeneity between the computation tasks of MDs and MEC servers could incur resource under-utilization among MEC servers, which brings difficulties in designing efficient computation offloading methods to ensure satisfied QoE for MDs and high resource utilization among MEC servers. \textit{\textbf{iii) Energy-Efficient and Real-Time Trajectory Control.}} The mobility of MDs and random generation of computation tasks lead to spatiotemporal dynamics in the offloading requirements, which necessitates real-time trajectory control. However, the intrinsic limited onboard energy of UAVs restricts the service time, thus posing challenges for energy-efficient and real-time  UAV trajectory control. \textit{\textbf{iii) Different Time-Scale Dynamics.}} The dynamic characteristics of the UAV-assisted MEC network, such as the dynamic of the channel, random arrival of tasks, and mobility of MDs, vary across different time scales. Accordingly, integrating these features into a joint optimization framework for computing resource allocation, computation offloading, and trajectory control is of great significance but leads complexity to the algorithm design.

\par This work presents a two time-scale computing resource allocation, computation offloading, and UAV trajectory control approach for the UAV-assisted MEC system. The main contributions are as follows:

\begin{itemize}
	\item \textit{\textbf{System Architecture.}} We employ a hierarchical architecture for a UAV-assisted MEC system that consists of a UE layer, a terrestrial edge layer, an aerial edge layer, and a control layer. Under the coordination of the software-defined networking (SDN) controller, the two time-scale decisions are made to deal with the demand-supply contradiction between MDs and MEC servers, demand-supply heterogeneity between MDs and MEC servers, and the spatiotemporal dynamics of tasks.
	
	\item \textit{\textbf{Problem Formulation.}} We formulate a joint computing resource allocation, computation offloading, and trajectory control problem to maximize the system utility.

    \item \textit{\textbf{Algorithm Design.}}  To solve the formulated problem, we propose a two time-scale joint computing resource allocation, computation offloading, and trajectory control (TJCCT) algorithm. Specifically, TJCCT incorporates two time-scale optimization methods. In the short time scale, we propose a price-incentive method for on-demand computing resource allocation and a matching mechanism-based method for computation offloading. In the long time scale, we propose a convex optimization-based method for UAV trajectory control.

	\item \textbf{\textit{Performance Evaluation.}} The performance of TJCCT is verified through both theoretical analysis and simulations. First, the optimality and polynomial complexity of TJCCT is proved theoretically. Furthermore, simulation results demonstrate that TJCCT outperforms the comparative algorithms in terms of overall system performance.
\end{itemize}

\par The remaining of this work is organized as follows. Section \ref{sec_related work} reviews the related work. Section \ref{sec_model} presents the system models and problem formulation. Section \ref{sec_jointOffloading} elaborates on the proposed TJCCT. The theoretical analysis is given in Section \ref{sec_stability}. Section \ref{sec_simulation} shows the simulation results and discussions. Finally, this work is concluded in Section \ref{sec_conclusion}.

\section{Related work}
\label{sec_related work}

\par \textbf{\textit{Joint Computation Offloading and Resource Allocation.}} Most existing works focus on joint computation offloading and computing resource allocation. For example, Yu et al. \cite{Yu2020} proposed a joint computation offloading and resource allocation approach to minimize the weighted sum of the service delay and UAV energy consumption. Ding et al. \cite{Ding2023} investigated the offloading and resource management for UAV-assisted MEC secure communications. Nie et al. \cite{nie2021semi} presented a semi-distributed method for joint computation offloading and resource allocation. Guo et al. \cite{10134570} explored the problem of joint task scheduling and computing resource allocation with computation data dependency. Goudarzi et al. \cite{10058597} utilized cooperative evolutionary computation to solve the joint optimization of computation offloading and computing resource allocation. However, in most of these studies, UAVs are either fixed or follow predetermined trajectories, which may not be suitable for the scenarios with randomly-distributed MDs.

\par \textbf{\textit{Joint Computation Offloading and Trajectory Control.}} To harness the full potential of flexible and elastic offloading services, joint computation offloading and UAV trajectory control has attracted widespread attention. For example, Lin et al. \cite{lin2023pddqnlp} jointly optimized the computation offloading and UAV trajectory to maximize the energy efficiency of UAVs. Shen et al. \cite{9796968} proposed a QoE-oriented service provision model to optimize the UAV trajectory and computing resource allocation. Chen et al. \cite{Chen2023} focused on the maximum-minimum average secrecy capacity by jointly optimizing the trajectory and computation offloading. Wang et al. \cite{Wang2022a} proposed a cooperative offloading scheme for post-disaster rescue. However, most of these studies focused on a relatively static scenario with stationary users and employed the offline algorithm to plan the entire trajectory across the system timeline. Therefore, these works may not be directly applicable to our study with multiple UAVs and mobile users.

\par This work studies the joint computing resource allocation, computation offloading, and UAV trajectory control in the UAV-assisted MEC system, where the limited computing resources of MEC servers, the stringent requirements of computation tasks, the heterogeneity between MDs and MEC servers, and the dynamics of the network are jointly considered.

%
%
\section{System Model and Problem Formulation}
\label{sec_model}
\subsection{System Model}
\label{sec_system_model}

\subsubsection{System Overview}
\label{sec:system_overview}

\begin{figure}[!t] 
	\centering
	\includegraphics[width =3.5in]{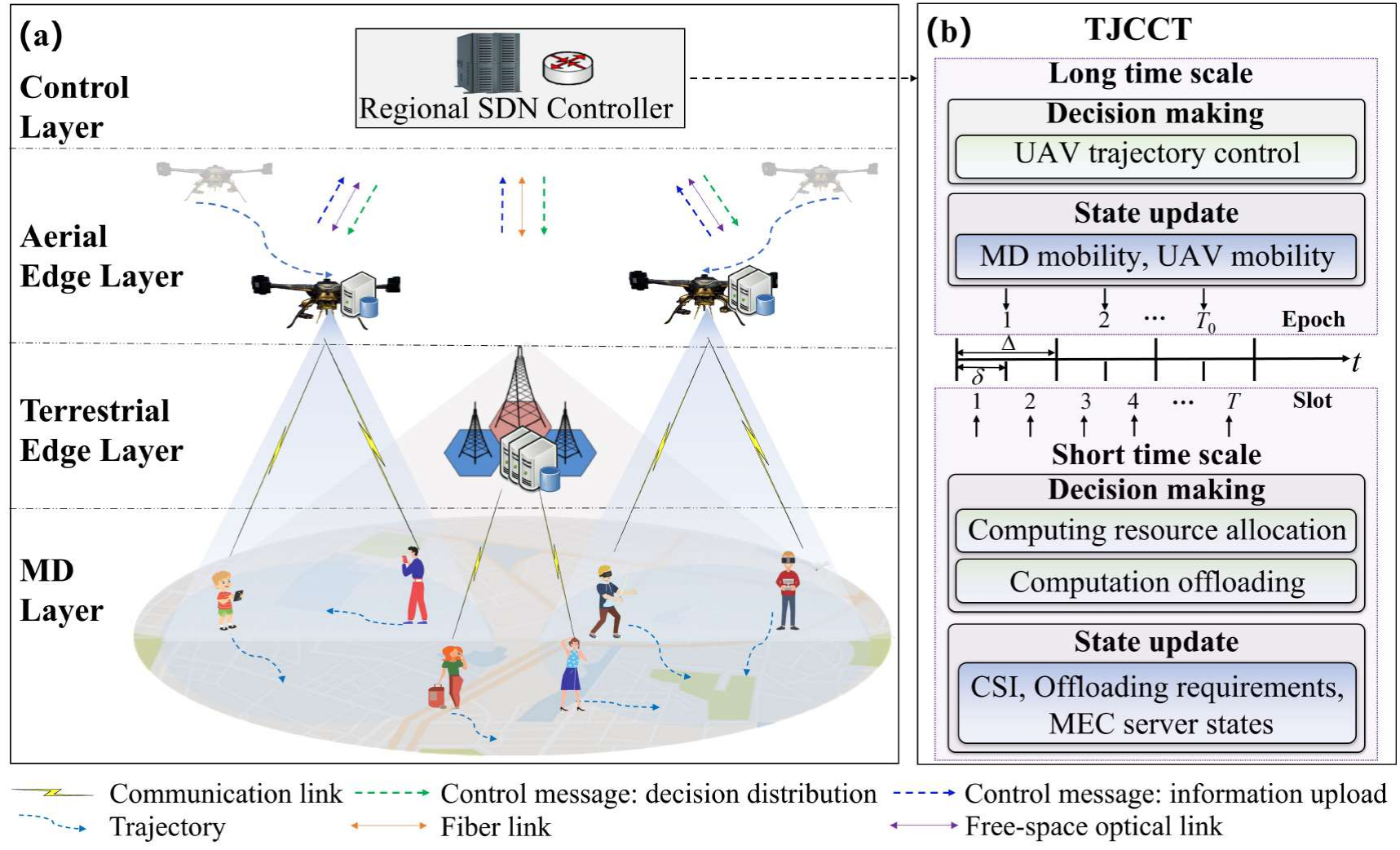}
	\caption{The architecture of computing resource allocation, computation offloading, and UAV trajectory control for UAV-assisted MEC system.}
	\label{fig_systemModel}
\end{figure}

\par We consider a hierarchical UAV-assisted MEC system as shown in Fig. \ref{fig_systemModel}.

\textbf{\textit{In the spatial dimension}}, the hierarchical UAV-assisted MEC system comprises an MD layer, a terrestrial edge layer, an aerial edge layer, and a control layer. Specifically, \textit{at the MD layer}, a set of MDs $\mathcal{I}=\{1, \ldots, I\}$ moving in the considered area periodically handle the tasks with diverse requirements. \textit{At the terrestrial edge layer}, one macro base station (MBS) $b$ equipped with terrestrial MEC server provides edge computing services for the MDs within its service range. The MBS is connected to the control layer through high-speed fiber links \cite{Wang2022}. \textit{At the aerial edge layer}, the rotary-wing UAVs $\mathcal{U}=\{1, \ldots, U\}$ equipped with aerial MEC servers are dispatched as aerial base stations to assist the MBS in providing supplementary computing services for MDs. The UAVs are connected to the control layer through free-space optical links. \textit{At the control layer}, the regional SDN controller, on which our algorithm runs, coordinates the decisions regarding computation offloading for MDs, computing resource allocation for MEC servers, and trajectory control for UAVs based on the knowledge acquired from the terrestrial and aerial edge layers. Furthermore, we consider that the radio access links of the system are allocated orthogonal frequency bands \cite{ji2022trajectory}. Additionally, terrestrial MEC server and MBS are are used interchangeably, and the aerial MEC server and UAV are used interchangeably. Besides, the terrestrial MEC server and aerial MEC servers are collectively referred to MEC servers which are indexed as $j\in\{b\cup \mathcal{U}\}$.

\par \textbf{\textit{In the temporal dimension}}, we consider that the system operates in a two time-scale manner since the dynamic characteristics for channel state, task arrival of MDs, and workload update of edge servers vary in a fine-grained timescale, while the mobility of MDs varies in a long timescale \cite{Li2021}. Specifically, the finite system time horizon is discretized into $T$ time slots $\mathcal{T}=\{1,\ldots,T\}$ with equal slot duration $\delta$, which is consistent with the coherence block of the wireless channel \cite{Liao2021, Gou2023achieve}. Furthermore, every $\Delta$ consecutive slots are combined into a time epoch indexed by  $t_0\in \mathcal{T}_0=\{1,\ldots,T_0\}$ ($\Delta$ is selected to be sufficiently small to guarantee that the positions of UAVs and MDs are approximately constant within each epoch). Therefore, in the short timescale, the channel state information, offloading requirements of MDs, and states of MEC servers are captured and updated, and the decisions of task offloading and resource allocation are determined. In the long timescale, the mobility states of MDs are captured and updated, and the UAV trajectory control is decided.

\subsubsection{Basic Models}
	
\par The basic models are given as follows.

\par \textit{\textbf{MD Mobility Model.}} The horizontal coordinate of MD $i$ in time slot $t$ is denoted as $\mathbf{q}_{i,t}=[x_{i}^t,y_{i}^t]_{i\in\mathcal{I}}$. Moreover, we adopt Gauss Markov model \cite{liang1999predictive, Batabyal2015} to capture the temporal-dependent randomness in the movement of MDs. Specifically, the velocity of MD $i\in \mathcal{I}$ at time epoch $t_0+1$ (i.e., time slot $(t_0+1)\Delta$) can be given as
\begin{sequation}
	\label{eq_MD_mobility}
		 \mathbf{v}_{i}^{(t_0+1)\Delta}=\alpha\cdot  \mathbf{v}_{i}^{t_0\Delta}+\big(1-\alpha\big) \bar{\mathbf{v}}+\bar{\sigma}\sqrt{1-\alpha^2} \mathbf{w}_{i},
\end{sequation}

\noindent where $ \mathbf{v}_i^{t_0\Delta}$ is the velocity vector at time epoch $t_0$ and $ \mathbf{w}_i$ is the uncorrelated random Gaussian process, i.e., $ \mathbf{w}_i\sim f^{\text{Gua}}\big(0,\bar{\sigma}^2\big)$. Besides, $0\leq \alpha\leq 1$, $\bar{\mathbf{v}}$, and $\bar{\sigma}$ denote the memory degree, asymptotic mean, and asymptotic standard deviation of velocity, respectively. Therefore, the location of each MD can be updated as
\begin{equation}
	\label{eq_MD_position}
	\begin{aligned}
		\mathbf{q}_{i}^{(t_0+1)\Delta}=\mathbf{q}_{i}^{t_0\Delta} + \mathbf{v}_i^{t_0\Delta} \cdot\delta\cdot\Delta, \forall i\in \mathcal{I}, \  t_0 \in \mathcal{T}_0.
	\end{aligned}
\end{equation}

\par \textit{\textbf{UAV Mobility Model.}} Similar to most existing studies \cite{Wang2022a,song2022evo}, we consider that each UAV $j$ flies at a fixed altitude $H$ with the instantaneous horizontal coordinate of $\mathbf{q}_{j}^{t}=[x_{j}^t,y_{j}^t]_{j\in\mathcal{U}}$. Therefore, the location of each UAV can be updated as follows:
\begin{equation}
	\label{eq_UAV_position}
	\begin{aligned}
	\mathbf{q}_{j}^{(t_0+1)\Delta}=\mathbf{q}_{j}^{t_0\Delta}+\mathbf{v}_{j}^{t_0\Delta}\cdot\delta\cdot\Delta, \forall j\in \mathcal{U},\ t_0 \in \mathcal{T}_0,
	\end{aligned}
\end{equation}

\noindent where $\mathbf{v}_{j}^{t_0\Delta}$ denotes the velocity of UAV $j$ at time epoch $t_0$. Furthermore, the position of each UAV should satisfy several practical constraints: 
\begin{subequations}
	\label{eq_UAV_mob}
	\begin{alignat}{2}
		&\textbf{q}_{j}^{1}=\textbf{q}_j^{I},\ \textbf{q}_{j}^{T_0\Delta}=\textbf{q}_j^{F}, \ \forall j\in\mathcal{U},\label{eq_UAV_mob_ini_fin}\\
		&||\textbf{q}_{j}^{(t_0+1)\Delta}-\textbf{q}_{j}^{t_0\Delta}||\leq v_{\text{U}}^{\max}\delta\Delta,\ \forall j\in\mathcal{U}, \ t_0 \in \mathcal{T}_0, \label{eq_UAV_mob_dis_each}\\
		&||\textbf{q}_{j}^{F}-\textbf{q}_{j}^{t_0\Delta}||\leq v_{\text{U}}^{\max}  (T_0-t_0)\delta \Delta,\ \forall j\in\mathcal{U}, \ t_0 \in \mathcal{T}_0, \label{eq_UAV_mob_dis_each1}
	\end{alignat}
\end{subequations}


\noindent where $v_U^{\max}$ is the maximum velocity of UAV. Constraint (\ref{eq_UAV_mob_ini_fin}) predetermines the initial and final positions of UAVs. Besides, constraints \eqref{eq_UAV_mob_dis_each} and \eqref{eq_UAV_mob_dis_each1} indicate that 
the trajectory of each UAV is constrained by the maximum velocity.

\par\textit{\textbf{MD Model.}} Each MD $i \in \mathcal{I}$ is characterized by the tuple $\big\langle f_i^{\max}, n_i^{\text{core}}, \tau_i^t, \zeta_{i}^t, \mathcal{K}_{i}^{t}\big\rangle$ in time slot $t$, where $f_i^{\max}$ represents the computing capability (in cycles/s), $n_i^{\text{core}}$ is the number of CPU core, $\tau_i^t$ denotes the energy constraint, $\zeta_{i}^t\in\{0,1\}$ is a binary variable that indicates whether a computing task is generated during time slot $t$, and $ \mathcal{K}_{i}^{t}$ denotes the task generated in time slot $t$. Furthermore, we assume that each MD is equipped with a single CPU core, i.e., $n_i^{\text{core}}=1$ \cite{Dai2019a}. 

\par\textit{\textbf{MEC Server Model.}} Similar to \cite{Dai2019a}, we consider that the MEC servers are equipped with multi-core CPUs to enable parallel processing of multiple tasks. Consequently, each MEC server is characterized by the tuple $\langle n_j^{\text{core}}, f_j^{\max}, E_j^{\max}\rangle$,  where $n_j^{\text{core}}$ denotes the number of CPU cores, and each of these CPU cores assumed to have homogeneous computing resources of $f_j^{\max}$ (cycles/s). Moreover, $E_j^{\max}$ is the energy constraint.

\par\textit{\textbf{Computation Task Model.}} The computation task $\mathcal{K}_{i}^{t}$ is characterized as $ \mathcal{K}_{i}^{t}=\langle l_{i}^{t}, \mu_{i}^t, \tau_{i}^t\rangle$, where $l_{i}^{t}$ is the task size (in bits), $\mu_{i}^t$ is the required computation resources (in cycles), and $\tau_{i}^t$ is the deadline of the task. 

\par \textit{\textbf{Decision Variables.}} The decisions of computing resource allocation, computation offloading, and UAV trajectory control are jointly considered.  \textit{\textbf{i)}} Given that executing tasks at remote MEC servers incurs additional costs that should be covered by MDs, the unit price of computing resources $p_{j,i}^t$ (\$ / GHz) is introduced to capture the costs and revenue involved in the computation offloading process \cite{Xia2021}. Therefore, the computing resource allocation decision is denoted as $\big(f_{j,i}^t, p_{j,i}^t\big), \ \forall j \in \{b, \mathcal{U}\}$, where $f_{j,i}^t$ is the amount of computing resources allocated by MEC server $j$ to MD $i$ in time slot $t$, and $p_{j,i}^t$ is the unit price of computing resources paid by MD $i$ for completing task $\mathcal{K}_i^t$. \textit{\textbf{ii)}} The computation offloading decision is denoted as a binary variable $o_{i,n}^t\in\{0,1\}, n \in \mathcal{N} = \{0,b,\mathcal{U}\}$, which indicates task $\mathcal{K}_i^t$ of MD $i$ is processed locally ($o_{i,0}^t=1$) or offloaded to MEC server $j\in\{b,\mathcal{U}\}$ ($o_{i,j}^t=1$) in time slot $t$. \textit{\textbf{iii)}} The trajectory for each UAV is denoted as $\{\mathbf{q}_j^t\}_{j\in \mathcal{U}, t\in \mathcal{T}}$.

%
%
\subsection{Communication Model}
\label{sec_communicationModel}

\par By using the widely used orthogonal frequency division multiple access (OFDMA) \cite{ji2022trajectory}, the uplink data rate between MD $i\in\mathcal{I}$ and MEC server  $j\in\{b\cup \mathcal{U}\}$ can be given as:
\begin{sequation}
	\label{eq_dataRate}
	r_{i,j}^t=B_{i,j}^t\cdot \log_2\big(1+P_{i}^{t}\cdot g_{i,j}^t/N_0\big),
\end{sequation}

\noindent where $B_{i,j}^t$ is the subchannel bandwidth between MD $i$ and MEC server $j$ in time slot $t$, $P_{i}^{t}$ is the instantaneous transmit power of MD $i$, $N_0$ is the noise power, and $g_{i,j}^t$ is the instantaneous channel power gain.

\par Due to the complex nature of the communication environment in UAV-assisted MEC networks, such as the movement of MDs and occasional blockages caused by obstacles, the channel power gain is calculated by incorporating the commonly used probabilistic LoS channel with the large-scale and small-scale fadings as $g_{i,j}^t={\mathbb{P}_{i,j}^{t}}\cdot g_{i,j}^{t,\text{L}} +(1-\mathbb{P}_{i,j}^{t})\cdot  g_{i,j}^{t,\text{N}}$~\cite{ji2022trajectory}, where $g_{i,j}^{t,x}$ denotes the channel power gain between MD $i$ and MEC server $j$, $\mathbb{P}_{i,j}^{t}$ denotes the LoS probability, and $ x\in\{\mathrm{L},\mathrm{N}\}$ represents LoS or non-line-of-sight (NLoS). The parameters ${\mathbb{P}_{i,j}^{t}}$ and $g_{i,j}^{t,x}$ are detailed as follows.

\par  \textbf{LoS Probability.} For the MD-MBS link, the LoS probability can be calculated referring to 3GPP standard~\cite{3GPP3D2015} as $\mathbb{P}_{i,j}^t=\min (d_1/d_{i,j}^t, 1)(1-\exp(-d_{i,j}^t/d_2))+\exp({-d_{i,j}^t/d_2})$ ($j=b$),  where $d_{i,j}^t$ is the instantaneous distance between MD $i$ and MEC server $j$, $d_1$ and $d_2$ are parameters to fit the specific scenarios.  For the MD-UAV link, an extensively employed LoS probability is given as \cite{AlHourani2014,sun2023uav} $\mathbb{P}_{i,j}^t=1/(1+\varrho_1 \cdot \exp {(-\varrho_2(\frac{180}{\pi}\arctan(\frac{H}{d_{i,j}^{t}})-\varrho_1))})$($j\in \mathcal{U}$), where $d_{i,j}^t=\|\mathbf{q}_i^t-\mathbf{q}_j^t\|$ is the horizontal distance between MD $i$ and UAV $j$, and $\varrho_1$ and $\varrho_2$ are environment-dependent parameters.

\par  \textbf{Channel Power Gain.} According to \cite{3GPPTR389012020}, the channel power gain between MD $i$ and MEC server $j\in\{b,\mathcal{U}\}$ in time slot $t$ can be uniformly given as $g_{i,j}^{t,x}=|h_{i,j}^{t,x}|^2(L_{i,j}^{t,x})^{-1}$, where $h_{i,j}^{t,x}$ and $L_{i,j}^{t,x}$ denote the parameters of small-scale fading and large-scale fading for LoS and NLoS links, which are modeled as the parametric-scalable Nakagami-$m$ fading \cite{cheng2007mobile} and the shadow fading \cite{3GPPTR389012020}, respectively.

%
%

\subsection{Computation Model}
\label{sec_DelayModel}

\par The computation offloading decision generally incurs overheads in terms of delay and energy consumption, which are detailed as follows.

\subsubsection{Local Computing Model}

\par The completing delay and energy consumption for local computing are given as follows.
 
\par \textbf{Completion Delay.} When task $\mathcal{K}_{i}^{t}$ is executed locally by MD $i$, the task completion delay is mainly incurred by task computation, which can be given as: 
\begin{sequation}
 	\label{eq_time_local}
 	D_{i,i}^t=\mu_{i}^t/f_i^t,
 \end{sequation}

 \noindent where $f_i^t$ is the currently available computing resources of $i$.
 
\par \textbf{Energy Consumption.} Correspondingly, the energy consumption of MD $i$ to execute task $\mathcal{K}_i^t$ locally is given as:
\begin{sequation}
	\label{eq_energyLocalExe}
	E_{i,i}^t=\gamma_i(f_i^t)^{2}\mu_{i}^t,
\end{sequation}

\noindent where $\gamma_i\geq0$ is the effective capacitance of the CPU  that relies on the CPU chip architecture \cite{pan2021cost}.

\subsubsection{Edge Offloading Model}

\par When task $\mathcal{K}_{i}^{t}$ is offloaded to terrestrial MEC sever $j\in\{b,\mathcal{U}\}$, the completing delay and energy consumption are given as follows.

\par \textbf{Completion Delay.} The task completion delay mainly consists of transmission delay and computation delay, i.e.,
\begin{sequation}
	\label{eq_edge_delay}
	\begin{aligned}
		D_{i,j}^t=l_{i}^{t}/r_{i,j}^t+\mu_{i}^t/f_{j,i}^t,
	\end{aligned}
\end{sequation}

\noindent where $f_{j,i}^t$ denotes the computing resources allocated by MEC server $j$ to MD $i$ in time slot $t$.

\par \textbf{Energy Consumption.} The energy consumption consists of transmission energy for the MD, computation energy for the MEC server, and flight energy for the UAV. For MD $i\in\mathcal{I}$, the energy consumption for task uploading is:
\begin{sequation}
	\label{eq_energyMDMEC}
	E_{i,j}^t=P_{i}^{t} l_{i}^{t}/r_{i,j}^t.
\end{sequation}

\par Furthermore, for MEC server $j\in \{b,\mathcal{U}\}$, the energy consumption for task execution can be given as $E_{j,i}^{t,\text{comp}}=\gamma_j(f_{j,i}^t)^{2}\mu_{i}^t$, where $\gamma_j\geq0$ denotes the effective capacitance of terrestrial MEC server $j$'s CPU. For aerial MEC server $j\in \mathcal{U}$, the energy consumption is incurred not only during computation but also during UAV flight. Specifically, the propulsion power of the rotary-wing UAV in straight-and-level flight includes the components of blade profile power, induced power, and parasite power \cite{Zeng2019,pan2023joint}, which can be given as {\small$	E_{j}^{p}=\underbrace{\eta_1\big(1+3(v_{j}^{t})^2/({v_j^{\text{tip}}})^2)}_{\text {Blade profile power}}+\underbrace{\eta_2 \sqrt{\sqrt{\eta_3+(v_{j}^{t})^4/4}-(v_{j}^{t})^2/2}}_{\text {Induced power}}\underbrace{+\eta_4(v_{j}^{t})^3}_{\text {Parasite power}}$}, where $v_j^{\text{tip}}$ denotes the tip speed of the rotor blade, and $\eta_1$, $\eta_2$, $\eta_3$, and $\eta_4$ are the constants that depend on the aerodynamic parameters of the UAV. Consequently, the energy consumption of MEC server $j$ to provide computation service for task $\mathcal{K}_{i}^{t}$ can be concluded as:

{\small
\begin{subnumcases}{\label{eq_energyMecComp1}E_{j,i}^t=}
	$$ \gamma_j(f_{j,i}^t)^{2}\mu_{i}^t, \ j=b$$, \label{eq_energyMecComp1A}\\
	$$ \gamma_j(f_{j,i}^t)^{2}\mu_{i}^t + \big( \eta_1\big(1+3(v_{j}^{t})^2/({v_j^{\text{tip}}})^2)+\eta_4(v_{j}^{t})^3\big.\notag \\ \big. +\eta_2 \sqrt{\sqrt{\eta_3+(v_{j}^{t})^4/4}-(v_{j}^{t})^2/2}  \big) \delta, \ j\in \mathcal{U}$$.\label{eq_energyMecComp1B}
\end{subnumcases}
}
%
%
\subsection{Models for QoE, Revenue, and System Utility}
\label{sec_utilityModel}

\subsubsection{QoE of MDs}
\label{sec_MDUtility}

\par The QoE obtained by MD $i$ in time slot $t$ is calculated as the difference between the satisfaction degree of task completion and the costs of task offloading. 

\begin{sequation}
	\begin{aligned}
	\label{eq_MD_utility}
		&U_{i,a}^t=w_i\cdot\underbrace{\log\big(1+\tau_i^t-D_{i,n}^t\big)/\log(1+\tau_i^t)}_{\text{Satisfaction degree}}-(1-w_i)\cdot \\&\big(\overbrace{\mathbb{I}_{(n=0)}\underbrace{E_{i,i}^t/E_{i}^{\max}}_{\text{Cost of energy}}}^{\text{Local computing}}+ \overbrace{\mathbb{I}_{(n=j)} (\underbrace{E_{i,j}^t/ E_{i}^{\max}}_{\text{Cost of energy}}+\underbrace{f_{j,i}^t\cdot p_{j,i}^t /G_i^{\max}}_{\text{Cost of payment}})}^{\text{Edge offloading}}\big), \\
		& \forall i \in \mathcal{I},\ j\in \{b,\mathcal{U}\}, n \in \mathcal{N}, 
	\end{aligned}
\end{sequation}

\noindent where the metrics of satisfaction degree and cost, which have different units, are normalized first and incorporated by the weight parameter $w_i$. Specifically, the satisfaction degree represents MD $i$'s satisfaction level in completing the task, which is commonly modeled as a logarithmic function \cite{Xia2021}. Furthermore, the terms $E_{i,i}^t/E_i^{\max}$ and $E_{i,j}^t/E_i^{\max}$ represent the normalized energy consumption of local computing (i.e., $\mathbb{I}_{(n=0)}=1$) and edge offloading (i.e., $\mathbb{I}_{(n=j)}=1$), respectively, where $E_i^{\max}$ is the energy constraint of MD $i$.  Besides, $f_{j,i}^t\cdot p_{j,i}^t /G_i^{\max}$ represents the normalized payment, where $p_{j,i}^t$ is the unit price of computing resource paid by MD $i$, and $G_i^{\max}$ is the budget of MD $i$.

\subsubsection{Revenue of MEC Servers}
\label{sec_MECUtility}

\par  The utility obtained by MEC server $j\in \{b,\mathcal{U}\}$ from executing task $\mathcal{K}_i^t$ is calculated as the difference between the reward received from MD $i$ and the cost of energy consumption, i.e.,
\begin{sequation}
	\label{eq_MEC_utility}
         \begin{aligned}
	&U_{j,i}^t = w_j\underbrace{f_{j,i}^t p_{j,i}^t/(f_{j}^{\max} p_j^{\max})}_{\text{Reward from MD}}-(1-w_j)\underbrace{E_{j,i}^t/E_j^{\max}}_{\text{Cost of energy}}, \\&\forall i \in \mathcal{I},\ j\in \{b,\mathcal{U}\}.
 \end{aligned}
\end{sequation}

\noindent Similar to Eq. \eqref{eq_MD_utility}, the metrics of reward and cost are normalized first and incorporated by the weight parameter $w_j$. Specifically, $f_{j,i}^t p_{j,i}^t/(f_{j}^{\max} p_j^{\max})$ denotes the normalized reward of  MEC $j$ received by providing computation service for MD $i$, where $p_j^{\max}$ is the maximum price for MEC server $j$'s computing resource. Furthermore, $E_{j,i}^t/E_j^{\max}$ represents the normalized cost of energy consumption of MEC server $j$,  where $E_j^{\max}$ is the energy constraint of MEC server $j$.

\subsubsection{System Utility}
\label{sec_social_welfare}

\par The system utility is calculated by summing the QoE of MDs and the revenue of MEC servers in time slot $t$, i.e.,
\begin{sequation}
	\label{eq_socialwelfare}
	\begin{aligned}
			U^t&=\sum_{i\in \mathcal{I}}\sum_{n\in \mathcal{N}}\zeta_{i}^t\cdot o_{i,n}^t\cdot \big(U_{i,n}^t+U_{n,i}^t\big).
		\end{aligned}
\end{sequation}

%
%
\subsection{Problem Formulation}
\label{sec_problemFormulation}

\par The optimization problem is formulated to maximize the system utility by jointly optimizing the computing resource allocation and pricing strategy $\mathbf{F}^t= \{f_{j,i}^t,p_{j,i}^t\}_{i\in \mathcal{I},j\in \{b,\mathcal{U}\}}$, the computation offloading strategy $\mathbf{O}^t= \{o_{i,n}^t\}_{i\in \mathcal{I}, n \in \mathcal{N}}$,  and the UAV trajectory control $\mathbf{Q}^t= \{\mathbf{q}_j^t\}_{j\in \mathcal{U}}$. Therefore, the problem can be formulated as follows:

{
\begin{subequations}
	\label{eq_problem}
	\begin{alignat}{2}
		\mathbf{P}: \quad &\max_{\mathbf{O},\mathbf{F},\mathbf{Q}}  \ U^t \label{utility}\\
		\text{s.t.} \quad &  o_{i,n}^t\in\{0,1\}, \ \forall i\in \mathcal{I}, \ n\in \mathcal{N}\label{pro_c1}\\
		& \sum_{n\in \mathcal{N}}o_{i,n}^t\leq 1, \ \forall  i\in \mathcal{I}, \  n\in \mathcal{N}\label{pro_c2}\\
		& o_{i,n}^t\cdot D_{i,n}^t\leq \tau_i^t, \ \forall i\in \mathcal{I},  n\in \mathcal{N}, \label{pro_c4}\\
		&\sum_{i\in\mathcal{I}} o_{i,j}^t\cdot f_{j,i}^t \leq  f_j^{\max}, \  \forall j\in \{b, \mathcal{U}\}, \label{pro_c6}\\
		&\sum_{i \in \mathcal{I}} o_{i,j}^t \leq N_{j}^{\text{core}}, \ \forall j \in \{b, \mathcal{U}\}, \label{pro_c7}\\
		&o_{i,j}^t\cdot c_{j,i}^t\cdot f_{j,i}^t\leq G_i^{\max}, \ \forall i\in \mathcal{I}, \ j\in{\{b, \mathcal{U}\}}, \label{pro_c10}\\
		& \eqref{eq_MD_mobility} \sim \eqref{eq_UAV_mob}. \label{pro_c11}				
	\end{alignat}
\end{subequations}
}

\noindent Constraints \eqref{pro_c1} and \eqref{pro_c2} indicate the constraints of offloading strategy. Constraint \eqref{pro_c4} ensures that the delay of completing the task should not exceed the deadline. Constraints \eqref{pro_c6} and \eqref{pro_c7} constrain the computing resources and the number of CPU cores, respectively, for MEC server. Constraint \eqref{pro_c10} guarantees that the price paid by each MD to the MEC server should not exceed its budget. Constraint \eqref{pro_c11} limits the mobility of MDs and UAVs.

\begin{theorem}
	\label{lemma_NP}
	Problem $\mathbf{P}$ is a non-convex MINLP. 
\end{theorem}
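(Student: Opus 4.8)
The plan is to verify, one at a time, the three defining properties of a non-convex MINLP: the mixed-integer variable structure, the nonlinearity of the model, and the non-convexity of the feasible region and objective. The argument is essentially a structural inspection of $\mathbf{P}$, so the work lies in isolating the right witnessing terms rather than in any long computation.

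First, I would identify the mixed-integer structure. The decision variables of $\mathbf{P}$ consist of the binary offloading indicators $o_{i,n}^t \in \{0,1\}$ imposed by constraint \eqref{pro_c1}, together with the continuous resource-and-price variables $f_{j,i}^t, p_{j,i}^t$ and the continuous UAV coordinates $\mathbf{q}_j^t$. The simultaneous presence of integer and continuous decision variables establishes the mixed-integer part of the claim immediately.

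Second, I would establish nonlinearity by expanding the objective $U^t$ in \eqref{eq_socialwelfare} through the QoE and revenue definitions \eqref{eq_MD_utility} and \eqref{eq_MEC_utility}. The payment and reward terms both contain the bilinear product $f_{j,i}^t p_{j,i}^t$; the energy terms contain the quadratic $(f_{j,i}^t)^2$ in \eqref{eq_energyMecComp1} and, for aerial servers, the propulsion power with its square-root and cubic dependence on $v_j^t$; and the satisfaction term is logarithmic in the delay $D_{i,n}^t$, which itself depends on $1/f_{j,i}^t$ and on $1/r_{i,j}^t$ with $r_{i,j}^t$ given by the logarithmic rate \eqref{eq_dataRate}. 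Moreover, each such utility is multiplied by the binary $o_{i,n}^t$, and the deadline constraint \eqref{pro_c4} inherits the same nonlinear dependence. This confirms that both the objective and the constraints are nonlinear.

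Third, and this is the crux, I would argue non-convexity, where two independent sources each suffice. The discrete constraint $o_{i,n}^t \in \{0,1\}$ already renders the feasible set a finite collection of points, hence non-convex. To show that the difficulty persists even under the continuous relaxation $o_{i,n}^t \in [0,1]$, I would isolate a single offloading pair with $o_{i,j}^t = 1$, fix every variable except $(f_{j,i}^t, p_{j,i}^t)$, and restrict the objective to that two-dimensional slice. On this slice the objective reduces to the bilinear form $f_{j,i}^t p_{j,i}^t$ plus terms that are affine or convex in $(f_{j,i}^t, p_{j,i}^t)$; the Hessian of the bilinear term is $\bigl(\begin{smallmatrix} 0 & 1 \\ 1 & 0 \end{smallmatrix}\bigr)$, whose eigenvalues are $\pm 1$ and hence indefinite, so the restriction is neither convex nor concave and the full objective is non-convex. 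The main obstacle I anticipate is making this step airtight: one must guarantee that the non-convexity exhibited on the chosen slice reflects a genuine non-convexity of the global problem rather than an artifact other terms could cancel. The clean resolution is that convexity of a function is inherited by every restriction to an affine subspace, so an indefinite Hessian along one coordinate slice already refutes global convexity; alternatively, I could exhibit two explicit feasible points whose midpoint violates the convexity inequality, sidestepping any differentiability concerns arising from the $\min$ in the MD--MBS LoS probability and the square roots in the UAV propulsion power.
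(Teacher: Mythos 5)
Your argument is correct and essentially complete, but it is worth noting that the paper itself does not prove this theorem at all: its ``proof'' is a one-line deferral to a similar result in a cited reference. Your proposal supplies the standard self-contained structural verification that such a deferral implicitly relies on: (i) the coexistence of the binary $o_{i,n}^t$ with the continuous $(f_{j,i}^t, p_{j,i}^t, \mathbf{q}_j^t)$ gives the mixed-integer part; (ii) the bilinear payment/reward products, the quadratic and square-root energy terms, and the logarithm-of-logarithm dependence of the satisfaction term on the rate give nonlinearity; and (iii) non-convexity survives the continuous relaxation because of the bilinear term. This is more informative than the paper's citation, and your observation that convexity is inherited by restrictions to affine slices is exactly the right tool to make the slice argument legitimate. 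One step is stated a bit loosely: after fixing everything but $(f_{j,i}^t, p_{j,i}^t)$, the residual terms are not all ``affine or convex'' (the $-(f_{j,i}^t)^2$ energy term is concave, and the logarithmic satisfaction term has no fixed curvature). The cleaner way to close this is to note that $p_{j,i}^t$ enters the objective \emph{only} through the two bilinear terms, so the $2\times 2$ Hessian in $(f_{j,i}^t, p_{j,i}^t)$ has zero $(p,p)$-entry and off-diagonal entry $c = w_j/(f_j^{\max}p_j^{\max}) - (1-w_i)/G_i^{\max}$; whenever $c \neq 0$ its determinant is $-c^2 < 0$, hence the Hessian is indefinite irrespective of the curvature of the $f$-only terms. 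You should flag the degenerate case $c=0$ (an exact cancellation of the two bilinear coefficients) and handle it separately, e.g.\ via the reciprocal and logarithmic dependence of the objective on $f_{j,i}^t$ or via your proposed two-point midpoint counterexample; with that caveat the proof is airtight.
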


\begin{proof}
  A similar proof can refer to \cite{Wang2022AA}.
\end{proof}

%
%
\section{Algorithm}
\label{sec_jointOffloading}

\par To solve problem $\mathbf{P}$, we propose TJCCT, which is comprised of two time-scale optimization methods. Specifically, \textbf{\textit{in the short time scale}}, a price-incentive trading model is constructed based on the bargaining mechanism to facilitate the negotiation between the MDs and the MEC servers for on-demand computing resource allocation and pricing. Furthermore, to deal with the heterogeneity between the computation tasks of MDs and MEC servers, a many-to-one matching is established to stimulate end-edge collaboration for mutual-satisfactory computation offloading.  \textbf{\textit{In the long time scale}}, based on the optimal strategies of computing resource allocation and computation offloading, UAV trajectory is optimized by using convex optimization.

%
%
\subsection{Small Timescale: Computing Resource Allocation and Computation Offloading}
\label{sec_bargain}

\par In each time slot, the strategies of computing resource allocation and computation offloading are decided.

\subsubsection{Computing Resource Allocation}
\label{sec_bargain}

\par As previously discussed, the unit price of computing resources is introduced to capture the task processing costs that should be covered by MDs. This can be viewed as the process of trading wherein each MD that acts as a buyer purchases computing resources from the suitable MEC server that acts as a seller. Consequently, we are motivated to employ the bargaining mechanism to stimulate the negotiation between the MD and MEC server to achieve satisfied trading regarding the on-demand computing resource allocation and pricing.
\par \textbf{\textit{(a) The Optimal Strategy for Computing Resource Allocation.}} Given the computing resource pricing $p_{j,i}^t$, the optimal amount of computing resource allocated by MEC server $j$ to MD $i$ can be obtained by Theorem \ref{lemma_opt_allo}.

\begin{theorem}
	\label{lemma_opt_allo}
	The optimal amount of computing resources that MD $i$ expects to request from the target MEC server $j$ to offload task $\mathcal{K}_i^t$ is determined as $f_{j,i}^{t^*}=2w_i G_i^{\max}/\big(\vartheta(p_{j,i}^t)-\log\big(1+\tau_i^t\big) p_{j,i}^t(1-w_i)\big)$.
\end{theorem}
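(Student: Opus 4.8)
The plan is to treat Theorem~\ref{lemma_opt_allo} as an unconstrained one-dimensional concave maximization of MD $i$'s QoE over the requested resource $f_{j,i}^t$, holding the price $p_{j,i}^t$ fixed. First I would specialize the QoE \eqref{eq_MD_utility} to the edge-offloading branch ($n=j$) and substitute the completion delay \eqref{eq_edge_delay}, $D_{i,j}^t=l_i^t/r_{i,j}^t+\mu_i^t/f_{j,i}^t$. The key observation is that the transmission-energy cost $E_{i,j}^t/E_i^{\max}$ from \eqref{eq_energyMDMEC} is independent of $f_{j,i}^t$, so dropping constants the objective becomes the single-variable function $U_{i,j}^t(f)=\frac{w_i}{\log(1+\tau_i^t)}\log(1+\tau_i^t-l_i^t/r_{i,j}^t-\mu_i^t/f)-(1-w_i)\frac{p_{j,i}^t}{G_i^{\max}}f$, with $f=f_{j,i}^t$.

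Next I would establish strict concavity so that a stationary point is the unique global maximizer. Writing $\kappa=1+\tau_i^t-l_i^t/r_{i,j}^t$ for the effective deadline margin, the log term is defined only on $f>\mu_i^t/\kappa$, and on this interval the satisfaction term is a concave logarithm composed with the increasing concave map $f\mapsto\kappa-\mu_i^t/f$, while the payment term is linear; a direct second-derivative computation confirms $\partial^2 U_{i,j}^t/\partial f^2<0$ there. The first-order condition $\partial U_{i,j}^t/\partial f=0$ then reads $\frac{w_i}{\log(1+\tau_i^t)}\cdot\frac{\mu_i^t/f^2}{\kappa-\mu_i^t/f}=(1-w_i)\frac{p_{j,i}^t}{G_i^{\max}}$.

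The main work is to solve this stationarity equation in closed form. Clearing the rational denominators converts it into a quadratic in $f$, namely $\log(1+\tau_i^t)(1-w_i)p_{j,i}^t\kappa f^2-\log(1+\tau_i^t)(1-w_i)p_{j,i}^t\mu_i^t f-w_iG_i^{\max}\mu_i^t=0$. Since the leading coefficient is positive and the constant term is negative, exactly one root is positive; taking it and rationalizing (using the conjugate form of the quadratic formula and cancelling the common factor $\mu_i^t$) yields the advertised expression $f_{j,i}^{t^*}=2w_iG_i^{\max}/(\vartheta(p_{j,i}^t)-\log(1+\tau_i^t)p_{j,i}^t(1-w_i))$, where $\vartheta(p_{j,i}^t)=\sqrt{\log^2(1+\tau_i^t)(1-w_i)^2(p_{j,i}^t)^2+4\log(1+\tau_i^t)(1-w_i)p_{j,i}^t w_iG_i^{\max}\kappa/\mu_i^t}$ is precisely the (rescaled) square root of the discriminant.

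The delicate points I would watch for are root selection and feasibility rather than any deep difficulty. I must confirm that the chosen root exceeds $\mu_i^t/\kappa$, so the satisfaction term stays well defined and the deadline margin remains positive, and that the denominator $\vartheta(p_{j,i}^t)-\log(1+\tau_i^t)p_{j,i}^t(1-w_i)$ is strictly positive; the latter follows because the extra nonnegative term $4\log(1+\tau_i^t)(1-w_i)p_{j,i}^t w_iG_i^{\max}\kappa/\mu_i^t$ under the radical makes $\vartheta$ strictly larger than $\log(1+\tau_i^t)p_{j,i}^t(1-w_i)$. I expect the algebraic bookkeeping in passing from the rational first-order condition to the rationalized quadratic-formula expression to be the most error-prone step, so I would verify the final form by back-substitution into the stationarity equation.
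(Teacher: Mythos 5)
Your derivation is correct and is surely the intended one: the paper itself omits the proof of Theorem~\ref{lemma_opt_allo}, but treating the QoE in \eqref{eq_MD_utility} (edge branch, with $D_{i,j}^t$ from \eqref{eq_edge_delay} and the $f$-independent transmission energy dropped) as a strictly concave function of $f_{j,i}^t$ alone, solving the first-order condition as a quadratic, and rationalizing via the conjugate reproduces the stated closed form exactly, and in doing so supplies the definition of $\vartheta(p_{j,i}^t)$ that the paper leaves unspecified. Your feasibility checks (positive root exceeds $\mu_i^t/\kappa$ because the upward-opening quadratic is negative at that point, and the denominator is positive because the radicand strictly dominates $\log^2(1+\tau_i^t)(1-w_i)^2(p_{j,i}^t)^2$) are also sound; the only caveat worth stating explicitly is that this is the MD's \emph{unconstrained} ideal request, with the budget and server-capacity constraints deferred to Algorithm~\ref{algo_allo_price} and the matching stage.
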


\begin{proof}
We omit the proof due to the page limitation.
\end{proof}

\par \textbf{\textit{(b) The Optimal Strategy for Computing Resource Pricing.}} Based on the Rubinstein bargaining mechanism \cite{rubinstein1982perfect}, the negotiation between MD $i$ and MEC server $j$ on the price of the computing resource is modeled as the bargaining over the price surplus, which is given by Lemma \ref{lemma_priceRange}.

\begin{lemma}
    \label{lemma_priceRange}
	 The price surplus for a successful negotiation between MD $i$ and MEC server $j$ is given as $\pi_{j,i}^t = \overline{p}_{j,i}^t - \underline{p}_{j,i}^t$, where $\underline{p}_{j,i}^t =(1-w_j)E_{j,i}^tp_j^{\max}f_j^{\max}/(w_jE_j^{\max}f_{j,i}^t)$ and $\overline{p}_{j,i}^t=\big(w_i\log(1+\tau_i^t-D_{i,j}^t)/ ((1-w_i)\log(1+\tau_i^t))-P_i^tl_i^t/(r_{i,j}^t\tau_i^t)\big) (G_i^{\max}/f_{j,i}^t)$ are the lower bound and upper bound for the unit price of the computing resource.
\end{lemma}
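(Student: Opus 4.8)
The plan is to read Lemma~\ref{lemma_priceRange} as the characterization of the feasible trading interval in a Rubinstein bargaining game between MD~$i$ (the buyer) and MEC server~$j$ (the seller): a trade can occur only at a unit price that both parties find individually rational, so the interval is bounded below by the seller's reservation price and above by the buyer's reservation price, and the price surplus $\pi_{j,i}^t$ is precisely the width of that interval. I would therefore derive the two endpoints separately from the two participation constraints, and then take their difference.

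\textbf{Lower bound.} For the seller, I would impose $U_{j,i}^t \geq 0$ with $U_{j,i}^t$ from Eq.~\eqref{eq_MEC_utility}. Since the reward term $w_j f_{j,i}^t p_{j,i}^t/(f_j^{\max} p_j^{\max})$ is affine and strictly increasing in $p_{j,i}^t$ while the energy-cost term does not depend on price, the minimum acceptable price is attained at the boundary $U_{j,i}^t = 0$. Solving $w_j f_{j,i}^t p_{j,i}^t/(f_j^{\max} p_j^{\max}) = (1-w_j)E_{j,i}^t/E_j^{\max}$ for $p_{j,i}^t$ reproduces $\underline{p}_{j,i}^t$ exactly.

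\textbf{Upper bound.} Symmetrically, for the buyer I would impose $U_{i,a}^t \geq 0$ using the edge-offloading branch ($n=j$) of Eq.~\eqref{eq_MD_utility}. The payment term $f_{j,i}^t p_{j,i}^t/G_i^{\max}$ is the only price-dependent quantity and enters negatively, so $U_{i,a}^t$ is strictly decreasing in $p_{j,i}^t$ and the largest tolerable price is again the zero-utility boundary. Isolating $p_{j,i}^t$ from $w_i\log(1+\tau_i^t-D_{i,j}^t)/\log(1+\tau_i^t) = (1-w_i)\big(E_{i,j}^t/E_i^{\max} + f_{j,i}^t p_{j,i}^t/G_i^{\max}\big)$ and substituting $E_{i,j}^t = P_i^t l_i^t/r_{i,j}^t$ from Eq.~\eqref{eq_energyMDMEC} yields $\overline{p}_{j,i}^t$. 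Setting $\pi_{j,i}^t = \overline{p}_{j,i}^t - \underline{p}_{j,i}^t$ then gives the claimed surplus.

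The conceptual content is light, so I expect the main obstacle to be twofold and largely mechanical. First, the bookkeeping of the allocation factor $f_{j,i}^t$ that converts the total payment into a unit price must be handled consistently on both sides so that the algebra collapses to the stated closed forms, and the substitutions of $D_{i,j}^t$ and $E_{i,j}^t$ must be tracked carefully. Second, and more substantively, a \emph{successful} negotiation requires the interval to be non-empty, i.e.\ $\pi_{j,i}^t > 0$; verifying this is delicate because both endpoints depend on $f_{j,i}^t$, which is itself pinned down by the optimal allocation of Theorem~\ref{lemma_opt_allo}, so one must check that the allocation chosen there is consistent with a strictly positive surplus rather than merely assuming it. The monotonicity claims used to identify the two endpoints are, by contrast, immediate from the affinity of both utilities in $p_{j,i}^t$.
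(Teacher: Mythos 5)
The paper itself omits the proof of Lemma~\ref{lemma_priceRange} ("due to the page limitation"), so there is no authorial argument to compare against; judged on its own terms, your reading of the lemma is the natural and almost certainly intended one. Identifying $\underline{p}_{j,i}^t$ and $\overline{p}_{j,i}^t$ as the individual-rationality (zero-utility) boundaries of the seller and buyer is correct, and your lower-bound computation checks out exactly: solving $w_j f_{j,i}^t p_{j,i}^t/(f_j^{\max}p_j^{\max}) = (1-w_j)E_{j,i}^t/E_j^{\max}$ reproduces the stated $\underline{p}_{j,i}^t$ term for term. Your observation that monotonicity in $p_{j,i}^t$ is immediate from affinity is also right, and your concern about non-emptiness of the interval is legitimate but largely discharged by the lemma's own qualifier "for a successful negotiation" together with Definition~\ref{def_pricingRule}, which explicitly handles the cases where one or both utilities are negative.

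There is, however, one concrete step where your derivation does not land where you claim it does. Isolating $p_{j,i}^t$ from the zero-utility condition of Eq.~\eqref{eq_MD_utility} gives $\overline{p}_{j,i}^t=\big(w_i\log(1+\tau_i^t-D_{i,j}^t)/((1-w_i)\log(1+\tau_i^t))-E_{i,j}^t/E_i^{\max}\big)(G_i^{\max}/f_{j,i}^t)$, and substituting Eq.~\eqref{eq_energyMDMEC} turns the energy term into $P_i^t l_i^t/(r_{i,j}^t E_i^{\max})$. The lemma instead states $P_i^t l_i^t/(r_{i,j}^t \tau_i^t)$, i.e.\ the transmission energy is normalized by the deadline $\tau_i^t$ rather than by the energy budget $E_i^{\max}$. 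Your write-up asserts that the substitution "yields $\overline{p}_{j,i}^t$," which papers over this mismatch. Either the lemma carries a typo ($\tau_i^t$ for $E_i^{\max}$), or the authors' omitted derivation uses a different energy normalization than the one written in Eq.~\eqref{eq_MD_utility}; in either case you should flag the discrepancy explicitly rather than claim the algebra closes, since as written your argument proves a statement that differs from the lemma in that one factor.
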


\begin{proof}
We omit the proof due to the page limitation.
\end{proof}

\par Based on Lemma \ref{lemma_priceRange}, MD $i$ and MEC server $j$ take turns making offers about how to divide the surplus until reaching a perfect partition, which is given in Lemma \ref{theorem_ne}.

\begin{lemma}
	\label{theorem_ne}
	\par The bargaining model has a unique perfect partition. In the period in which MD $i$ makes a proposal, the optimal partitions are given as:
\vspace{-1pt} 
{\small \begin{subnumcases}{\label{eq_ne}}
	$$\xi_{i,i}^{t^*}=\lambda_{i}^t-\big(1-\lambda_{i}^t\big)\big(1-(\lambda_{i}^t\lambda_{j}^t)^{\lceil \frac{T^b}{2}\rceil}\big)/(1-\lambda_{i}^t\lambda_{j}^t)$$,  \label{eq_optpartionA}\\
	$$\xi_{j,i}^{t^*}=(1-\lambda_{i}^t)\big(2-\lambda_{i}^t\lambda_{j}^t-(\lambda_{i}^t\lambda_{j}^t)^{\lceil \frac{T^b}{2}\rceil}\big)/(1-\lambda_{i}^t\lambda_{j}^t)$$.\label{eq_optAlloPriceB}
\end{subnumcases}}

\noindent where $\lambda_{i}^t=1-l_i^t/(r_{i,j}^t\tau_i^t)$ and $\lambda_{j}^t=1-\mu_i^t/(f_{j,i}^t\tau_i^t)$ denote the discount factors of MD $i$ and MEC server $j$, which 
are employed to evaluate the patience with the negotiation delay. 
\end{lemma}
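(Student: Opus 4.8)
The plan is to recognize the price negotiation of Lemma~\ref{theorem_ne} as a finite-horizon Rubinstein alternating-offers game over the price surplus $\pi_{j,i}^t=\overline{p}_{j,i}^t-\underline{p}_{j,i}^t$ established in Lemma~\ref{lemma_priceRange}, and to solve it by backward induction. First I would normalize the surplus to the unit interval and fix the protocol: over the $T^b$ bargaining rounds MD $i$ proposes in the odd rounds and MEC server $j$ in the even rounds, so that $i$ makes $\lceil T^b/2\rceil$ offers; a rejected offer advances the game one round and shrinks each party's valuation by its own discount factor $\lambda_i^t$ or $\lambda_j^t$, which encode patience relative to the deadline $\tau_i^t$. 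The subgame-perfect equilibrium (the ``perfect partition'') is then obtained by rolling the game back from the terminal round.

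The core step is the backward-induction recursion on continuation values. In the last round the proposer faces a responder whose rejection yields nothing, so the proposer keeps the entire remaining surplus; in any earlier round the proposer offers the responder exactly its discounted continuation payoff and retains the complement. Writing $x^{(k)}$ for the proposer's share when $k$ rounds remain, this gives coupled linear recursions that tie $i$'s and $j$'s proposer shares through the factors $\lambda_j^t$ and $\lambda_i^t$. Substituting one into the other collapses them to a single two-step recursion of the form $x^{(k)}=c+\lambda_i^t\lambda_j^t\,x^{(k-2)}$ for a constant $c$, whose solution is a geometric series in $\lambda_i^t\lambda_j^t$. Summing the series over the $\lceil T^b/2\rceil$ rounds in which MD $i$ proposes produces the factor $(1-(\lambda_i^t\lambda_j^t)^{\lceil T^b/2\rceil})/(1-\lambda_i^t\lambda_j^t)$, which, after regrouping the constant and geometric terms, yields exactly the shares $\xi_{i,i}^{t^*}$ and $\xi_{j,i}^{t^*}$ in \eqref{eq_ne}; a direct check that they sum to the full surplus confirms the bookkeeping.

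Uniqueness of the perfect partition follows from the standard finite-game argument. At every responder node the best response is unique---accept if and only if the offer weakly exceeds the discounted continuation value, with the indifference case resolved by acceptance as in the Rubinstein construction---and at every proposer node the optimal offer is the unique one extracting all surplus above the responder's continuation value. Backward induction over the finite tree therefore selects a single strategy profile, so the equilibrium division is unique; and since each proposer's equilibrium offer is immediately accepted, agreement is reached in the first round with the stated partition.

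The main obstacle is the parity bookkeeping in the recursion: the two-step structure forces a careful treatment of whether the terminal round belongs to $i$ or to $j$, and of how the running count of $i$'s proposals maps onto the exponent $\lceil T^b/2\rceil$, since an off-by-one in the geometric sum would change the closed form. A secondary technical point is verifying individual rationality, i.e.\ showing that the equilibrium price stays within $[\underline{p}_{j,i}^t,\overline{p}_{j,i}^t]$ so that the negotiation is genuinely successful, which ties the abstract share back to the admissible pricing interval of Lemma~\ref{lemma_priceRange}.
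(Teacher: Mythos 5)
The paper does not spell out a proof here---it simply defers to the standard alternating-offers bargaining literature (\cite{binmore1986nash})---and your backward-induction argument over the finite-horizon Rubinstein game, with the two-step recursion $x^{(k)}=c+\lambda_i^t\lambda_j^t x^{(k-2)}$ collapsing to a geometric series with exponent $\lceil T^b/2\rceil$, is exactly the canonical derivation that reference supplies. Your proposal is correct and takes essentially the same route as the paper's intended proof; the check that $\xi_{i,i}^{t^*}+\xi_{j,i}^{t^*}=1$ indeed confirms the bookkeeping.
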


\begin{proof}
	\label{eq_optPartition}
	A similar proof can refer to \cite{binmore1986nash}.
\end{proof}

\begin{theorem}
	\label{lemma_pricing}	
	\par The optimal pricing of computation resource $p_{j, i}^{t^*}$ is obtained as: in the period when MD $i$ makes an offer,  $p_{j,i}^{t^*}=\overline{p}_{j,i}^t-\pi_{j,i}^t\cdot {\xi_{i,i}^{t^*}}$, and in the period when MEC server $j$ makes an offer, $	p_{j,i}^{t^*}=\overline{p}_{j,i}^t-\pi_{j,i}^t \cdot \xi_{i,j}^{t^*}$.
\end{theorem}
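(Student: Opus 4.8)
The plan is to recognize the pricing decision as an affine reparametrization of the surplus division already solved in Lemma~\ref{theorem_ne}, and then read the price off the equilibrium shares. By Lemma~\ref{lemma_priceRange} every mutually acceptable price lies in $[\underline{p}_{j,i}^t,\overline{p}_{j,i}^t]$, an interval of length $\pi_{j,i}^t$. Writing an arbitrary agreed price as $p_{j,i}^t=\overline{p}_{j,i}^t-\pi_{j,i}^t\,s_i$ with $s_i\in[0,1]$, the scalar $s_i$ is exactly the fraction of the surplus conceded to MD $i$ (the buyer): $s_i=1$ returns the floor $\underline{p}_{j,i}^t$ and $s_i=0$ returns the ceiling $\overline{p}_{j,i}^t$. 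I would first confirm from Eqs.~\eqref{eq_MD_utility} and~\eqref{eq_MEC_utility} that the buyer's payoff is strictly decreasing and the seller's payoff strictly increasing in $p_{j,i}^t$; hence $s_i$ is a legitimate split of a single divisible surplus, and the turn-taking game over $s_i$ is precisely the Rubinstein bargaining characterized in Lemma~\ref{theorem_ne}.

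For the period in which MD $i$ proposes, the subgame-perfect share that MD $i$ secures is $s_i=\xi_{i,i}^{t^*}$ by Lemma~\ref{theorem_ne}. Substituting into the affine map gives $p_{j,i}^{t^*}=\overline{p}_{j,i}^t-\pi_{j,i}^t\,\xi_{i,i}^{t^*}$, which is the first claim. As a consistency check I would verify $\xi_{i,i}^{t^*}+\xi_{j,i}^{t^*}=1$ from the closed forms in Lemma~\ref{theorem_ne}; this shows the complementary seller share $\xi_{j,i}^{t^*}$ accounts for the markup $p_{j,i}^{t^*}-\underline{p}_{j,i}^t=\pi_{j,i}^t\,\xi_{j,i}^{t^*}$, so the whole surplus is partitioned with no slack.

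For the period in which MEC server $j$ proposes, Lemma~\ref{theorem_ne} records only the MD-proposer shares, so I would obtain MD $i$'s responder share $\xi_{i,j}^{t^*}$ by repeating the same finite-horizon backward induction with the proposal order reversed, i.e.\ by interchanging the roles of $i$ and $j$ in the recursion (equivalently, discounting MD $i$'s proposer share by one round of delay). Feeding $s_i=\xi_{i,j}^{t^*}$ into the identical affine map yields $p_{j,i}^{t^*}=\overline{p}_{j,i}^t-\pi_{j,i}^t\,\xi_{i,j}^{t^*}$, completing the second claim.

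The main obstacle is this last step: because only the MD-proposer partition is stated explicitly, $\xi_{i,j}^{t^*}$ must be derived rather than quoted, and the finite horizon makes the derivation sensitive to the turn order. In particular, the ceiling $\lceil T^b/2\rceil$ counting the MD-proposal rounds must be reindexed consistently once MEC server $j$ opens the bargaining, so that the truncation term $(\lambda_i^t\lambda_j^t)^{\lceil T^b/2\rceil}$ enters $\xi_{i,j}^{t^*}$ at the correct power; this bookkeeping is the only delicate part, as everything else reduces to substitution into the monotone affine price map.
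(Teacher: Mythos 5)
The paper itself gives no proof of this theorem (it is omitted for space), so there is nothing to compare against line by line; judged on its own terms, your approach is the natural and almost certainly intended one. The affine change of variables $p_{j,i}^t=\overline{p}_{j,i}^t-\pi_{j,i}^t\,s_i$ correctly identifies the price negotiation with the surplus-division game of Lemma~\ref{theorem_ne}, the monotonicity check via Eqs.~\eqref{eq_MD_utility} and \eqref{eq_MEC_utility} is the right way to justify that identification, and your consistency check $\xi_{i,i}^{t^*}+\xi_{j,i}^{t^*}=1$ does hold for the closed forms in Lemma~\ref{theorem_ne}, so the first claim follows by substitution as you say.

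The one genuine loose end is the second claim. You correctly observe that $\xi_{i,j}^{t^*}$ (MD $i$'s share in a period where the server proposes) is never defined in Lemma~\ref{theorem_ne}, which only records the partition for MD-proposer periods, and you propose to derive it by rerunning the finite-horizon backward induction with the turn order reversed. That is the right idea, but you stop short of executing it, and the reindexing of the truncation exponent $\lceil T^b/2\rceil$ that you flag is exactly where an error could creep in (with the server opening, the number of MD-proposal rounds changes to $\lfloor T^b/2\rfloor$, which alters the power of $\lambda_i^t\lambda_j^t$ in the geometric sum). Equivalently one can shortcut this by the standard Rubinstein observation that the responder's equilibrium share equals its own discount factor times its proposer share in the continuation game; either route should be written out to close the argument. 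Since the paper uses $\xi_{i,j}^{t^*}$ in the theorem statement without defining it anywhere, this is as much a gap in the paper's exposition as in your proposal, but your proof is not complete until that quantity is pinned down.
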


\begin{proof}
\label{eq_optPrice}
    We omit the proof due to the page limitation.
\end{proof}

\begin{corollary}
\label{cor_deal}

\par It can be concluded from Theorems \ref{lemma_opt_allo} and \ref{lemma_pricing} that a trading consensus can be reached on the computing resource allocation and pricing:
\begin{sequation}
	\label{eq_optAllo}
	f_{j,i}^{t^*}=2w_iG_i^{\max}/\big(\vartheta(p_{j,i}^{t^*})-\log\big(1+\tau_i^t) \cdot p_{j,i}^{t^*} \cdot (1-w_i)\big),
\end{sequation}

{\small \begin{subnumcases}{\label{eq_optAlloPrice}p_{j,i}^{t^*}=}
	$$\overline{p}_{j,i}^t-\Delta p_{j,i}^t\cdot \xi_{i,i}^{t^*}$$,  \label{eq_optAlloPriceA}\\
	$$\overline{p}_{j,i}^t-\Delta p_{j,i}^t\cdot  \xi_{i,j}^{t^*}$$.\label{eq_optAlloPriceB}
\end{subnumcases}}
\end{corollary}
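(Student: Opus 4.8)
The plan is to obtain this corollary as a direct composition of Theorem~\ref{lemma_opt_allo} and Theorem~\ref{lemma_pricing}, treating the bargained price as the input that closes the best-response allocation. First I would recall that Theorem~\ref{lemma_opt_allo} yields the MD's best-response demand $f_{j,i}^{t^*}$ as an explicit function of the prevailing unit price $p_{j,i}^t$; crucially, this relation is an identity in the price and therefore holds for \emph{any} admissible price, so it remains valid once the price is fixed at its negotiated value. In parallel, Theorem~\ref{lemma_pricing} (which rests on the Rubinstein construction of Lemmas~\ref{lemma_priceRange} and~\ref{theorem_ne}) pins down the equilibrium price $p_{j,i}^{t^*}$ as $\overline{p}_{j,i}^t$ less the proposer-dependent share of the surplus, where I identify $\Delta p_{j,i}^t \equiv \pi_{j,i}^t = \overline{p}_{j,i}^t - \underline{p}_{j,i}^t$ with the price surplus of Lemma~\ref{lemma_priceRange}.

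Next I would establish that a \emph{consensus} is actually reached, i.e., that the quoted equilibrium price is mutually acceptable rather than triggering a breakdown. The key observation is that the subgame-perfect partition shares satisfy $\xi_{i,i}^{t^*}, \xi_{i,j}^{t^*} \in [0,1]$ by Lemma~\ref{theorem_ne}, so that $p_{j,i}^{t^*} = \overline{p}_{j,i}^t - \pi_{j,i}^t \xi$ necessarily lies in the non-empty interval $[\underline{p}_{j,i}^t, \overline{p}_{j,i}^t]$ whenever $\pi_{j,i}^t \ge 0$. On this interval, by the very definition of the two bounds in Lemma~\ref{lemma_priceRange}, the MD's QoE surplus and the MEC server's revenue surplus are both non-negative; hence both parties weakly prefer trading to opting out, and the deal closes. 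This verification is precisely what upgrades the two optimality statements into a genuine trading agreement.

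Finally, I would substitute the equilibrium price $p_{j,i}^{t^*}$ from Theorem~\ref{lemma_pricing} into the allocation map of Theorem~\ref{lemma_opt_allo}. Because the allocation formula is an identity in the price, evaluating it at $p_{j,i}^{t^*}$ immediately produces \eqref{eq_optAllo}, while \eqref{eq_optAlloPrice} is exactly the two-branch price of Theorem~\ref{lemma_pricing} rewritten with $\Delta p_{j,i}^t$ in place of $\pi_{j,i}^t$. The resulting pair $(f_{j,i}^{t^*}, p_{j,i}^{t^*})$ is then the consensus outcome. I expect the main obstacle to be the feasibility check of the middle paragraph, namely confirming $\pi_{j,i}^t \ge 0$ (non-emptiness of the bargaining range) together with $\xi \in [0,1]$, so that the substituted price genuinely clears the market; once this is in hand, everything else reduces to mechanical substitution of the fixed equilibrium price into the best-response allocation.
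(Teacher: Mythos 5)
Your proposal is correct and follows the same route the paper intends: the corollary is obtained by substituting the equilibrium price from Theorem~\ref{lemma_pricing} into the best-response allocation of Theorem~\ref{lemma_opt_allo}, with $\Delta p_{j,i}^t$ identified with the surplus $\pi_{j,i}^t$ of Lemma~\ref{lemma_priceRange} (the paper states the corollary without proof, treating this composition as immediate). Your middle paragraph verifying $\pi_{j,i}^t \ge 0$ and $\xi \in [0,1]$ so that the negotiated price actually lies in $[\underline{p}_{j,i}^t, \overline{p}_{j,i}^t]$ is a worthwhile addition the paper leaves implicit, and it is the only part that requires more than mechanical substitution.
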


\par The trading contract between MD $i$ and MEC server $j$ is presented in Definition \ref{def_pricingRule} as follows.

\begin{definition}
	\label{def_pricingRule}
	Trading contract. The trading amount and price are determined based on the following terms.
	\begin{itemize}
		\item If $U_{i,j}^t>0$, $U_{j,i}^t>0$, a consensus in Eqs. \eqref{eq_optAllo} and \eqref{eq_optAlloPrice} is reached.
		\item If $U_{i,j}^t>0$, $U_{j,i}^t<0$, MD $i$ makes an offer of the computing resource pricing $p_{j,i}^{t^*}$ based on Eq. \eqref{eq_optAlloPriceA}.
		
		\item If $U_{i,j}^t<0$, $U_{j,i}^t>0$, MEC server $j$ makes an offer of the computing resource pricing $p_{j,i}^{t^*}$ based on Eq. \eqref{eq_optAlloPriceB}.
		
		\item If $U_{i,j}^t<0$, $U_{j,i}^t<0$,  either MD $i$ or MEC server $j$ can make an offer of the computing resource pricing $p_{j,i}^{t^*}$ based on Eq. \eqref{eq_optAlloPrice}.
	\end{itemize}
\end{definition}

\par According to Corollary \ref{cor_deal} and Definition \ref{def_pricingRule}, the algorithm for computing resource allocation and pricing is described in Algorithm \ref{algo_allo_price}. Specifically, in each iteration, MD $i$ and MEC server $j$ negotiate the optimal strategy of pricing based on the trading contract (line 6). Then update the optimal strategy of computing resource allocation (line 7). The above steps are iterated until a consensus is reached.

\begin{algorithm}[]	
	\label{algo_allo_price}	
	\SetAlgoLined
	\KwIn{MD $i$, MEC server $j$}
	\KwOut{The optimal resource allocation and pricing strategy $(f_{j,i}^{t^*}, p_{j,i}^{t^*})$ in time slot $t$}
	\textbf{ Initialization:} 
	$U_{i,j}^t= 0$; $U_{j,i}^t= 0$; $\iota^{\max}= 100$\;
	Set the optimal resource allocation as $f_{j,i}^{t^*} = f_j^{\text{avl}}$\;
	\While{$\iota \leq \iota^{\max}$}
	{
		Update $p_{j,i}^{t^*}$ based on Eq. \eqref{eq_optAlloPrice}\;
		Calculate $U_{i,j}^t$, $U_{j,i}^t$ based on Eqs. \eqref{eq_MD_utility} and \eqref{eq_MEC_utility}\;
		Perform the trading contract based on Definition \ref{def_pricingRule}\;
		Update $f_{j,i}^{t^*}$ based on Eq. \eqref{eq_optAllo}\;
  	$\iota = \iota +1$\;
	}
	\Return{$(f_{j,i}^{t^*}, p_{j,i}^{t^*})$}\;
	\caption{Computing Resource Allocation.}
\end{algorithm}	

%
%
\subsubsection{Computation Offloading}
\label{sec_matching}

\par  Matching mechanism offers an efficient tool to construct the mutual-beneficial relationship between two sets of entities with heterogeneous preferences. This motivates us to construct the matching between the computation tasks of MDs and MEC servers to alleviate the demand-supply heterogeneity. By doing so, the MDs and MEC servers can achieve mutual-beneficial computation offloading results of satisfied QoE and high computing resource utilization. Denote the set of computation tasks that have not begun execution in time slot $t$ as $\mathcal{K}_{\text{req}}^t=\{\mathcal{K}_i^{\mathbf{t}}|i\in \mathcal{I}, \mathbf{t}\in \mathcal{T}\}$, where $\mathbf{t}$ is the generation time of the computation task. The offloading strategy for these computation tasks in each time slot is decided using a many-to-one matching mechanism, which is defined as follows.

\begin{definition}
	The current matching is defined as a triplet of $(\mathcal{M}^t,\mathcal{L}^t, \Pi^t)$: 
	\begin{itemize} 
		\item $\mathcal{M}^t=\big(\mathcal{K}_{\text{req}}^{t}, \{b,\mathcal{U}\}\big)$ denotes the computation tasks and MEC servers.
		\item $\mathcal{L}^t=\big(\mathcal{L}_{\mathcal{K}_i^{\mathbf{t}}}^t,\mathcal{L}_{j}^t\big)$ consists of the preference lists of the computation tasks and MEC servers. Each computation task $\mathcal{K}_i^{\mathbf{t}}\in\mathcal{K}_{\text{req}}^{t}$ has a descending ordered preferences over the MEC servers, i.e., $\mathcal{L}_{\mathcal{K}_i^{\mathbf{t}}}^t=\{j|j\in\{b,\mathcal{U}\}, j\succ_{\mathcal{K}_i^{\mathbf{t}}}{j^\prime}\}$, where $\succ_{\mathcal{K}_i^{\mathbf{t}}}$ is the preference of computation task $\mathcal{K}_i$ towards the MEC servers. Moreover, each MEC server $j \in \{b,\mathcal{U}\}$ has a descending ordered preference list over the tasks, i.e., $\mathcal{L}_j^t=\{\mathcal{K}_i^{\mathbf{t}}\in \mathcal{K}_{\text{req}}^{t}, \mathcal{K}_i^{\mathbf{t}} \succ_{j} {\mathcal{K}_i^{\mathbf{t}}}^{\prime}\}$. 
	
		\item $\Pi^t\subseteq \mathcal{K}_{\text{req}}^{t} \times \{b,\mathcal{U}\}$ is the many-to-one matching between the tasks and MEC servers. Each task $\mathcal{K}_i^{\mathbf{t}}\in \mathcal{K}_{\text{req}}^{t}$ can be matched with at most one MEC server, i.e., $\Pi_{\mathcal{K}_i^{\mathbf{t}}}^t\in \{b,\mathcal{U}\}$, while each MEC server $j\in\{b,\mathcal{U}\}$ can be matched with multiple tasks, i.e., $\Pi_j^t\subseteq \mathcal{K}_{\text{req}}^{t}$.
	\end{itemize}
\end{definition}

\begin{algorithm}[]	
	\label{algo_matching}	
	\SetAlgoLined
	\KwIn{Tasks $\mathcal{K}_{\text{req}}^t=\{\mathcal{K}_i^{\mathbf{t}}|i\in \mathcal{I}, t\in \mathcal{T}\}$, and MEC servers  $\{b,\mathcal{U}\}$}
	\KwOut{The optimal matching list $\Pi^{t^*}$, offloading $\mathbf{O}^{t^*}$, and computing resource allocation $\mathbf{F}^{t^*}$}
	\textbf{ Initialization:} 
	$\mathcal{K}_{\text{rej}}^t=\mathcal{K}_{\text{req}}^{t}$, $\Pi^{t^*}=\emptyset$\;
	\For{$\mathcal{K}_i^{\mathbf{t}}\in \mathcal{K}_{\text{req}}^{t}$}
	{
		\For {$j\in \{b,\mathcal{U}\}$}
		{
			Call Algorithm \ref{algo_allo_price} to obtain $\big(f_{j,i}^{\mathbf{t}^*},p_{j,i}^{\mathbf{t}^*}\big)$\;
			Calculate $V_{\mathcal{K}_i^{\mathbf{t}},j}^t=U_{i,j}^{\mathbf{t}}$,$\ V_{j,\mathcal{K}_i^{\mathbf{t}}}=U_{j,i}^{\mathbf{t}}$\;
			$V_{\mathcal{K}_i^{\mathbf{t}},j}^t>V_{\mathcal{K}_i^{\mathbf{t}},j^{\prime}}^t \Leftrightarrow j\succ_{\mathcal{K}_i^{\mathbf{t}}} j^{\prime}, \ \mathcal{L}_{\mathcal{K}_i^{\mathbf{t}}}=\{j,j^{\prime}\}$\;
			$V_{j,\mathcal{K}}>V_{j,\mathcal{K}^{\prime}} \Leftrightarrow \mathcal{K}\succ_j \mathcal{K}^{\prime}, \ \mathcal{L}_j^t=\{\mathcal{K},\mathcal{K}^{\prime}\}$ \;
		}
	}
	\While{\rm{There exists} $\mathcal{K}_i^{\mathbf{t}}\in \mathcal{K}_{\text{rej}}^t$: $\mathcal{L}_{\mathcal{K}_i^{\mathbf{t}}}^t \neq \emptyset \ \&\& \ \mathcal{K}_i^{\mathbf{t}} \notin \mathcal{L}_j^t $}
	{
		\For{$\mathcal{K}_i^{\mathbf{t}}\in \mathcal{K}_{\text{rej}}^t$}
		{$\Pi_{\mathcal{K}_i^{\mathbf{t}}}^t=\Pi_{\mathcal{K}_i^{\mathbf{t}}}^t\cup j^{\prime}$, $\ j^{\prime} =  \mathcal{L}_{\mathcal{K}_i^{\mathbf{t}}}^t[1]$   \; 
			\If{$V_{\mathcal{K}_i^{\mathbf{t}},j^{\prime}}^t>0$}
			{
				 $\Pi_{j^{\prime}}^t=\Pi_{j^{\prime}}^t\cup \mathcal{K}_i^{\mathbf{t}}$
			}
		}
		\For{$j\in \{b, \mathcal{U}\}$ \rm{that receives new requests}}
		{
			 $|\Pi_j^t|\leq N_j \leq N_{j}^{\text{idl}}\ $,
			 $\sum_{\mathcal{K}_i^{\mathbf{t}} \in \Phi(j)} f_{j,i}^{\mathbf{t}^*} \leq f_j^{t,\text{avl}}$\;
			  $\Pi_j^t=\Pi_j^t\ \backslash \ \mathcal{D}_j^t$,$\ \mathcal{K}_{\text{rej}}^t=\mathcal{K}_{\text{rej}}^t\cup \mathcal{D}_j^t$ \;	
			{
				\For {$\mathcal{K}_i^{\mathbf{t}} \in \mathcal{D}_j^t$}
				{$\mathcal{L}_{\mathcal{K}_i^{\mathbf{t}}}^t=\mathcal{L}_{\mathcal{K}_i^{\mathbf{t}}}^t\ \backslash \ \{j\}$,$\ \Pi_{\mathcal{K}_i^{\mathbf{t}}}^t=\Pi_{\mathcal{K}_i^{\mathbf{t}}}^t\ \backslash \ \{j\}$; 
				}
			}	
		}	 	
	}	 	 	
	\Return {$\Pi^{t^*}=\Pi^t$, $\mathbf{O}^{t^*}=\{o_{i,j}^{t}| j=\Pi_{\mathcal{K}_i^{\mathbf{t}}}^t, \mathcal{K}_i^{\mathbf{t}}\in \mathcal{K}_{\text{req}}^{t}\}$, $\mathbf{F}^{t^*}=\{\big(f_{j,i}^{\mathbf{t}^*}, p_{j,i}^{\mathbf{t}^*}\big)|j=\Pi_{\mathcal{K}_i^{\mathbf{t}}}^t, \mathcal{K}_i^{\mathbf{t}}\in \mathcal{K}_{\text{req}}^{t}\}$}\;
\caption{Computation Offloading.}
\end{algorithm}

\par The main steps of the matching process are presented in Algorithm \ref{algo_matching}, which are described as follows.

\textit{\textbf{Preference List Construction.}} For each task $\mathcal{K}_i^{\mathbf{t}}\in \mathcal{K}_{\text{req}}^{t}$ and MEC server $j\in \{b,\mathcal{U}\}$, the preference lists are constructed based on the following steps: \textbf{\textit{i)}} predict the optimal resource allocation and pricing by calling Algorithm \ref{algo_allo_price} (line 4), \textbf{\textit{ii)}} calculate the values of preference for each task (MEC server) on MEC servers (tasks) (line 5), \textbf{\textit{iii)}} construct the preference list for each task and MEC server by ranking the preference values in descending order (lines 6 and 7).

\par \textit{\textbf{Matching Construction.}} The matching process is implemented according to the following steps: \textbf{\textit{i)}} for each computation task $\mathcal{K}_i^{\mathbf{t}}\in \mathcal{K}_{\text{rej}}^t$, select the most preferred MEC server $j^{\prime}$ and add it to the matching list temporarily (line 10), \textbf{\textit{ii)}} if the computation task prefers MEC server $j^{\prime}$, add the computation task to the matching list of $j^{\prime}$ temporarily (lines 11 and 12), \textbf{\textit{iii)}} for each MEC server that receives new requests, update the matching list by remaining the top\textendash $N_j$ most preferred computation tasks and removing the less preferred computation tasks (lines 13 to 14) to guarantee that current number of tasks and the allocated computing resources should not exceed the number of idle CPU cores $N_j^{t,\text{idl}}$ and the available computing resources $f_j^{t,\text{avl}}$ of the MEC server, \textbf{\textit{iv)}} add the deleted computation tasks into the rejected set, \textbf{\textit{v)}} update the preference list and matching list for the deleted computation tasks (lines 16 and 17). Repeat the above steps until all computation tasks have been matched with an MEC server, or the unmatched computation tasks have been rejected by all MEC servers.

\vspace{-3pt}

\subsection {Large Timescale: UAV Trajectory Control}
\label{sec_UAV Trajectory Control}

\par In each time epoch, the UAV trajectory is optimized by applying the convex approximation method. Specifically, with the optimized strategies of computing resource allocation and computation offloading, and eliminating the unrelated terms in the objective function and constraints, the problem of UAV trajectory optimization can be given as:
\begin{subequations}
	\label{eq_problem1}
	\begin{alignat}{2}
		\mathbf{P_t}: \ &\max_{\mathbf{Q}^{t^\prime}} U^t=\max_{\mathbf{Q}^{t^\prime}}\sum_{i\in \mathcal{I}}\sum_{j\in \mathcal{U}}\zeta_{i}^to_{i,j}^t \big(U_{i,j}^t+U_{j,i}^t\big) \notag\\
		&\eqref{eq_UAV_mob_ini_fin} \sim \eqref{eq_UAV_mob_dis_each1},\notag
	\end{alignat}
\end{subequations}

\noindent where $\mathbf{Q}^{t^\prime}$ denotes the positions of UAVs in the next time epoch $t^\prime=\big(\lceil t/\Delta \rceil+1\big)\Delta$.

\begin{lemma}
	\label{lem_appro_p1}
	Problem $\mathbf{P_t}$ can be approximately converted as:
	{\small
	\begin{subequations}
		\label{eq_problem11}
		\begin{alignat}{2}
			\overline{\mathbf{P}}_\mathbf{t}: \ &\max_{\mathbf{Q}^{t^\prime}}  \ \sum_{i\in \mathcal{I}}\sum_{j\in \mathcal{U}}\zeta_{i}^t  o_{i,j}^t  \big(\vartheta_0\log\big(1+\big(\tau_i^t-\frac{l_{i}^{t}}{\overline{r}_{i,j}^{t^\prime}}-\vartheta_1\big)\big)\big.\\
			&\eqref{eq_UAV_mob_ini_fin} \sim \eqref{eq_UAV_mob_dis_each1}.\notag
		\end{alignat}
	\end{subequations}
    }
	\noindent where $\vartheta_0 = w_i/(1+\tau_i^t)$, $\vartheta_1=\mu_i^t/f_{j,i}^t$, and $\overline{r}_{i,j}^{t^\prime} = B_{i,j}^t\log_2 \big(1+P_{i}^t\bar{g}_{i,j}^t/N_0 \big(\|{\mathbf{q}}_j^{t^\prime}-\mathbf{q}_i^t\|^2 + H^2\big)^{\beta_{A}/2}\big)$.
\end{lemma}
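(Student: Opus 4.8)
The plan is to begin from the system utility $U^t=\sum_{i\in\mathcal{I}}\sum_{j\in\mathcal{U}}\zeta_i^t o_{i,j}^t\big(U_{i,j}^t+U_{j,i}^t\big)$ restricted to the aerial servers $j\in\mathcal{U}$, and to substitute the closed forms of the MD QoE and the MEC revenue from \eqref{eq_MD_utility} and \eqref{eq_MEC_utility}. Since $\overline{\mathbf{P}}_\mathbf{t}$ is solved in the large timescale \emph{after} the short-timescale subproblems have fixed the offloading indicator $o_{i,j}^t$, the allocated resource $f_{j,i}^t$, and the negotiated price $p_{j,i}^t$, I would first declare all of these, together with any quantity built only from them, to be constants with respect to the sole remaining variable $\mathbf{q}_j^{t^\prime}$. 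The structural fact that drives everything is that $\mathbf{q}_j^{t^\prime}$ enters the utility only through the MD--UAV distance $\|\mathbf{q}_i^t-\mathbf{q}_j^{t^\prime}\|$, which propagates into the gain $g_{i,j}^t$, the uplink rate $r_{i,j}^t$ of \eqref{eq_dataRate}, and finally the transmission delay $l_i^t/r_{i,j}^t$ inside $D_{i,j}^t$ of \eqref{eq_edge_delay}.

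Next I would partition the substituted objective into position-independent and position-dependent parts. The reward and payment terms $f_{j,i}^t p_{j,i}^t/(f_j^{\max}p_j^{\max})$ and $f_{j,i}^t p_{j,i}^t/G_i^{\max}$, as well as the computation-energy cost $\gamma_j(f_{j,i}^t)^2\mu_i^t/E_j^{\max}$, are additive constants and are discarded without changing the $\arg\max$. The genuinely $\mathbf{q}_j^{t^\prime}$-dependent contributions are the MD satisfaction degree $w_i\log(1+\tau_i^t-D_{i,j}^t)/\log(1+\tau_i^t)$, the MD transmission energy $E_{i,j}^t=P_i^t l_i^t/r_{i,j}^t$, and the UAV flight energy inside $E_{j,i}^t$. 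I would then argue that, over a single feasible per-epoch displacement bounded by \eqref{eq_UAV_mob_dis_each}, the variation of the transmission- and flight-energy terms is of second order compared with that of the delay-sensitive satisfaction degree, so they are neglected (equivalently, the flight-energy budget is enforced through the mobility constraints rather than the objective). This isolates the satisfaction degree as the effective objective.

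The core conversion step is the channel approximation. I would replace the intractable probabilistic-LoS gain $g_{i,j}^t=\mathbb{P}_{i,j}^t g_{i,j}^{t,\mathrm{L}}+(1-\mathbb{P}_{i,j}^t)g_{i,j}^{t,\mathrm{N}}$ by a single deterministic path-loss expression $\bar g_{i,j}^t/(\|\mathbf{q}_j^{t^\prime}-\mathbf{q}_i^t\|^2+H^2)^{\beta_A/2}$, in which $\bar g_{i,j}^t$ absorbs the reference gain and the expected small-scale fading and $\beta_A$ is the aerial path-loss exponent; this produces exactly the surrogate rate $\overline{r}_{i,j}^{t^\prime}$ in the statement. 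Substituting $\overline{r}_{i,j}^{t^\prime}$ for $r_{i,j}^t$ and writing $D_{i,j}^t=l_i^t/\overline{r}_{i,j}^{t^\prime}+\mu_i^t/f_{j,i}^t=l_i^t/\overline{r}_{i,j}^{t^\prime}+\vartheta_1$, the satisfaction degree becomes $\vartheta_0\log\big(1+(\tau_i^t-l_i^t/\overline{r}_{i,j}^{t^\prime}-\vartheta_1)\big)$ once the constant normalizer is collapsed into the coefficient $\vartheta_0$. Because the feasible region is untouched, constraints \eqref{eq_UAV_mob_ini_fin}--\eqref{eq_UAV_mob_dis_each1} carry over verbatim, completing the conversion to $\overline{\mathbf{P}}_\mathbf{t}$.

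The hard part will be justifying the two approximations rather than the algebra. Pinning down the exact coefficient $\vartheta_0=w_i/(1+\tau_i^t)$ (as opposed to the literal $w_i/\log(1+\tau_i^t)$) reveals a hidden tangent approximation of the logarithmic normalizer about the operating deadline, which I would have to make explicit and whose residual I would bound. Similarly, the claims that discarding the energy terms and linearizing the LoS probability perturb the optimizer only negligibly are the delicate points; for a conference treatment I would justify them qualitatively via the smallness of the admissible displacement $v_U^{\max}\delta\Delta$ and the near-constant LoS geometry within one epoch, rather than proving a worst-case error guarantee.
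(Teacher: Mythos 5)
The paper gives no proof of Lemma~\ref{lem_appro_p1} (it is omitted for space), so there is nothing to compare line by line; judged against the lemma statement and the downstream problem $\overline{\mathbf{P}}_\mathbf{t1}$, your reconstruction gets the essential conversion right but contains one step that contradicts what the paper actually does. The parts you have right: treating $o_{i,j}^t$, $f_{j,i}^t$, $p_{j,i}^t$ as fixed by the short-timescale stage, discarding the position-independent reward, payment, and computation-energy terms as additive constants, and---the real content of the lemma---replacing the probabilistic-LoS composite gain by the deterministic expected path-loss model $\bar g_{i,j}^t/(\|\mathbf{q}_j^{t^\prime}-\mathbf{q}_i^t\|^2+H^2)^{\beta_A/2}$, which is exactly what produces the surrogate rate $\overline{r}_{i,j}^{t^\prime}$ and turns the satisfaction degree into $\vartheta_0\log(1+\tau_i^t-l_i^t/\overline{r}_{i,j}^{t^\prime}-\vartheta_1)$. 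Your observation that the stated $\vartheta_0=w_i/(1+\tau_i^t)$ disagrees with the literal normalizer $w_i/\log(1+\tau_i^t)$ from \eqref{eq_MD_utility} is well spotted, though the simplest explanation is a dropped $\log$ rather than a tangent approximation; either way the normalizer is a positive constant and does not affect the structure of the argument.

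The genuine gap is your decision to drop the MD transmission energy $E_{i,j}^t=P_i^t l_i^t/r_{i,j}^t$ and the UAV propulsion energy as ``second order.'' The paper keeps both: the objective of $\overline{\mathbf{P}}_\mathbf{t1}$ in \eqref{eq_problemSp1_fin} explicitly contains $-\vartheta_2 l_i^t/\overline{r}_{i,j}^t$ and $-\vartheta_3\big(\eta_1(1+3(v_j^t)^2/(v_j^{\text{tip}})^2)+\eta_2\phi+\eta_4(v_j^t)^3\big)\delta$, and the entire apparatus of the auxiliary variable $\phi$, the inequality \eqref{eq_aux_vel}, and the Taylor bound \eqref{eq_aux_vel1} exists solely to convexify the retained induced-power term. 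The objective displayed in Lemma~\ref{lem_appro_p1} is visibly truncated (note the dangling delimiter after the logarithm), so the full approximate objective must carry the energy terms forward; under your version of the lemma, Constraint \eqref{eq_Sp1_fin_c2} and Lemma~\ref{lemma_r_convex}'s companion bound on $\phi$ would be vacuous. Nor is the neglect easy to defend on its own terms: with $v_U^{\max}\delta\Delta=25$ m per epoch, the propulsion power varies by a non-negligible amount over the feasible displacement, so the ``second-order variation'' claim would need a quantitative bound you do not supply. The fix is simply to keep all three position-dependent terms (satisfaction degree, normalized transmission energy, normalized flight energy) after the constant-term elimination and the channel approximation; the approximation being certified by the lemma is then only the replacement of $r_{i,j}^t$ by $\overline{r}_{i,j}^{t^\prime}$, which is a cleaner and more honest statement of what ``approximately converted'' means here.
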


\begin{proof}
We omit the proof due to the page limitation.
\vspace{-1pt}
\end{proof}

\par  However, Problem $\overline{\mathbf{P}}_\mathbf{t}$ is a non-convex optimization problem due to the non-concavity of the objective function and the non-convexity of Constraint \eqref{eq_UAV_mob_dis_each}. Therefore, it will be transformed into a convex problem by the following steps.

\par \textbf{First}, since the objective function of $\overline{\mathbf{P}}_\mathbf{t}$ is non-convex with respect to $\overline{r}_{i,j}^{t^\prime}$, the auxiliary variables $\tilde{r}_{i,j}^{t^\prime}$ is first introduced such that $\tilde{r}_{i,j}^{t^\prime} \leq \overline{r}_{i,j}^{t^\prime}$, where the RHS is lower bounded by a concave function as given in Lemma \ref{lemma_r_convex}.

\begin{lemma}
	\label{lemma_r_convex}
	Given the local point $\hat{\mathbf{q}}_j^{s}$ at the $s$-th iteration, $\overline{r}_{i,j}^t$ is lower bounded by:
 
	\begin{sequation}
		\label{eq_r_firstTaylor}
		\begin{aligned}
			&\overline{r}_{i,j}^t \geq  B_{i,j}\log_2\big(1+P_{i}^t\bar{g}_{i,j}^t/N_0(H^2+\|\hat{\mathbf{q}}_j^{s}-\mathbf{q}_i^t\|^2)^{\beta/2}\big)-B_{i,j}\beta\\
			&\big(\|{\mathbf{q}}_j^{t^\prime}-\mathbf{q}_i^t\|^2-\|\hat{\mathbf{q}}_j^{s}-\mathbf{q}_i^t\|^2)/(2\ln2 (H^2+\|\hat{\mathbf{q}}_j^{s}-\mathbf{q}_i^t\|^2)\big)= \tilde{\tilde {r}}_{i,j}^{s}.
		\end{aligned}
	\end{sequation}
\end{lemma}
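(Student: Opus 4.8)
The plan is to reduce the multivariate rate expression to a single-variable convex function and then invoke the standard first-order under-estimator property of convex functions, i.e. the successive convex approximation step. First I would introduce the shorthand $c := P_i^t \bar{g}_{i,j}^t/N_0$ and treat the squared horizontal distance $z := \|\mathbf{q}_j^{t^\prime} - \mathbf{q}_i^t\|^2$ as a single scalar variable, so that $\overline{r}_{i,j}^t$ becomes $f(z) = B_{i,j}\log_2\!\big(1 + c\,(H^2+z)^{-\beta/2}\big)$. Note that $z$ is itself a convex quadratic function of the UAV position $\mathbf{q}_j^{t^\prime}$, a fact I will reuse at the end.

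Next I would establish that $f$ is convex in $z$. Writing $u = H^2 + z$ and $A = \beta/2$, a direct computation gives $f'(z) = -\frac{B_{i,j}A}{\ln 2}\cdot \frac{c}{u\,(u^A+c)}$, and differentiating once more yields $f''(z) = \frac{B_{i,j}A\,c}{\ln 2}\cdot \frac{(1+A)u^A + c}{u^2\,(u^A+c)^2} > 0$ for every $u>0$, $c>0$, $\beta>0$. This is the step I expect to be the main obstacle: the convexity of this log-of-inverse-power form is not evident by inspection and hinges on correctly tracking the sign of the second derivative after the $u^A$ substitution, since a sign slip here would invalidate the entire under-estimation argument.

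With convexity in hand, I would apply the first-order Taylor (supporting-line) inequality $f(z) \geq f(\hat z) + f'(\hat z)\,(z - \hat z)$, which holds globally for a convex $f$, at the local point $\hat z = \|\hat{\mathbf{q}}_j^{s} - \mathbf{q}_i^t\|^2$. Substituting $z$ and $\hat z$ back reproduces the constant term of \eqref{eq_r_firstTaylor} exactly, while the slope $f'(\hat z)$ collapses to the stated coefficient $-B_{i,j}\beta/\big(2\ln 2\,(H^2+\|\hat{\mathbf{q}}_j^{s} - \mathbf{q}_i^t\|^2)\big)$ in the high-SNR regime $c \gg (H^2+\hat z)^{\beta/2}$, where the factor $c/(u^A+c)$ tends to $1$; I would state this simplification explicitly so the bound matches the lemma verbatim.

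Finally I would close by arguing that the right-hand side $\tilde{\tilde r}_{i,j}^{s}$ is concave in $\mathbf{q}_j^{t^\prime}$: its first term does not depend on $\mathbf{q}_j^{t^\prime}$, and its second term is a nonpositive constant $f'(\hat z)\leq 0$ multiplied by the convex quadratic $\|\mathbf{q}_j^{t^\prime} - \mathbf{q}_i^t\|^2$, so that a negative multiple of a convex function is concave. This yields both the claimed lower bound and the concavity that the subsequent convexification of $\overline{\mathbf{P}}_\mathbf{t}$ relies on.
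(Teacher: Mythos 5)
Your proof follows the standard successive-convex-approximation route that the paper itself implicitly relies on (it offers no proof, only a pointer to \cite{Yang2022}): substitute $z=\|\mathbf{q}_j^{t^\prime}-\mathbf{q}_i^t\|^2$, verify that $f(z)=B_{i,j}\log_2\big(1+c(H^2+z)^{-\beta/2}\big)$ is convex in $z$, and apply the first-order under-estimator at $\hat z=\|\hat{\mathbf{q}}_j^{s}-\mathbf{q}_i^t\|^2$. Your first- and second-derivative computations are correct, and the closing remark that $\tilde{\tilde r}_{i,j}^{s}$ is concave in $\mathbf{q}_j^{t^\prime}$ is exactly the property that Problem $\overline{\mathbf{P}}_\mathbf{t1}$ needs.

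The one substantive point is the slope, and you are right to flag it. The exact tangent slope is $f'(\hat z)=-\frac{B_{i,j}\beta}{2\ln 2\,(H^2+\hat z)}\cdot\frac{c}{(H^2+\hat z)^{\beta/2}+c}$, whereas the lemma's right-hand side uses $-\frac{B_{i,j}\beta}{2\ln 2\,(H^2+\hat z)}$, i.e., it silently drops the factor $\frac{c}{(H^2+\hat z)^{\beta/2}+c}<1$. You observe that the two coincide only in the high-SNR regime; it is worth pushing this one step further. Because the omitted factor makes the printed slope strictly more negative than the true tangent slope, the printed affine function lies \emph{above} the convex function $f$ for $z<\hat z$ (i.e., whenever the UAV moves closer to the MD than the current local point), so inequality \eqref{eq_r_firstTaylor} as displayed is not a global lower bound unless one first applies the high-SNR approximation $\log_2(1+\mathrm{SNR})\approx\log_2(\mathrm{SNR})$. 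Your version of the bound, with the SNR factor retained in the slope, is the correct one; the discrepancy lies in the lemma's statement rather than in your argument, and you should state the high-SNR assumption explicitly (or keep the exact slope) rather than treating the match as automatic.
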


\begin{proof}
A similar proof can refer to ~\cite{Yang2022}.
\end{proof}

\par \textbf{Second}, to deal with the non-convexity of the UAV propulsion energy $E_{j}^{p}$, we introduce auxiliary variable $\phi$ such that
\begin{sequation}
	\label{eq_aux_vel}
	\phi^2 \geq \sqrt{\eta_3+(v_j^t)^4/4}-(v_j^t)^2/2\Longrightarrow \eta_3/(v_j^t)^2 \leq \phi^2+(v_j^t)^2,
\end{sequation}

\noindent where $v_j^t=\|\mathbf{q}_j^{t^\prime}-\mathbf{q}_j^t\|/(\delta\Delta)$. For the convex RHS of \eqref{eq_aux_vel}, a global concave lower bound can be obtained at the local point $\phi^s$ by using the first-order Taylor expansion, i.e.,
\begin{sequation}
	\label{eq_aux_vel1}
	\begin{aligned}
		 \phi^2+(v_j^t)^2\geq&(\phi^s)^2+2 \phi^s(\phi-\phi^s) +\|\hat{\mathbf{q}}_j^{s}-\mathbf{q}_i^t\|/\Delta^2\\&+2(\hat{\mathbf{q}}_j^{s}-\mathbf{q}_i^t)^T(\hat{\mathbf{q}}_u^{t^\prime}-\mathbf{q}_i^t)/\Delta^2=\tilde{\phi}^s\
	\end{aligned}
\end{sequation}

\par Based on Lemmas \ref{lem_appro_p1} and \ref{lemma_r_convex}, by introducing the auxiliary variables $\overline{r}_{i,j}^t$,  $\tilde{\tilde{r}}_{i,u}^s$, $\phi$, $\tilde{\phi}^s$, Problem $\overline{\mathbf{P}}_\mathbf{t}$ can be transformed as:
{\small
\begin{subequations}
	\label{eq_problemSp1_fin}
	\begin{alignat}{2}
		\overline{\mathbf{P}}_\mathbf{t1}: &\max_{\mathbf{Q}{t^\prime},}  \ \sum_{i\in \mathcal{I}}\sum_{j\in \mathcal{U}}\zeta_{i}^t  o_{i,j}^t  \Big(\vartheta_0\log(1+\tau_i^t-l_{i}^{t}/\overline{r}_{i,j}^{t^\prime}-\vartheta_1)\Big.\notag\\
		&\Big.-\vartheta_2l_i^t/\overline{r}_{i,j}^t-\vartheta_3\big(\eta_1(1+3 (v_{j}^{t})^2/{v_j^{\text{tip}}}^2)+\eta_2 \phi+\eta_4 (v_{j}^{t})^3\big)\delta\Big)\\
		\text{s.t.} \ &\overline{r}_{i,j}^t \leq \tilde{\tilde {r}}_{i,j}^s,\ \forall i \in \mathcal{I}, \  j \in \mathcal{U}, \ t \in \mathcal{T}, \label{eq_Sp1_fin_c1} \\	
		& \eta_3/(v_j^t)^2 \leq \tilde{\phi}^s, \ \forall i \in \mathcal{I}, \  j \in \mathcal{U}, \ t \in \mathcal{T}, \label{eq_Sp1_fin_c2}\\
		 &\eqref{eq_UAV_mob_ini_fin} \sim \eqref{eq_UAV_mob_dis_each}, \notag
	\end{alignat}
\end{subequations}
}

\noindent where Problem $\overline{\mathbf{P}}_\mathbf{t1}$ is convex since the objective function is concave and the feasible region is convex, which can be easily solved by optimization tools such as CVX. The solution of UAV trajectory control is summarized in Algorithm \ref{alg_Traj}. First, in the $s$-th iteration, the lower bounds $\tilde{\tilde {r}}_{i,j}^{s}$ and $\tilde{\phi}^s$ are calculated (line 3). Then, the optimal trajectory $\mathbf{Q}^{s^*}$ of Problem $\overline{\mathbf{P}}_\mathbf{t1}$ is obtained as the local point for the next iteration $\mathbf{Q}^{s+1}$ (lines 4 and 5). The iteration ends when the difference of the objective value falls below a given threshold $\epsilon$.

\begin{algorithm}[]	
	\label{alg_Traj}	
	\SetAlgoLined
	\KwIn{UAV location $\mathbf{Q}^{t}$, optimal offloading strategy $\mathbf{F}^{t^*}$ and optimal resource allocation strategy $\mathbf{O}^{t^*}$ in time slot $t$}
	\KwOut{UAV trajectory in the next time epoch $\mathbf{Q}^{t^{\prime*}}$}
	\textbf{ Initialization:} 
	$\epsilon$, $s=0$, ${\hat{\mathbf{q}}}_j^{s}=\mathbf{q}_j^t$, $U^{s}=0$\;
	\Repeat{$|U^{s}-U^{s-1}|\leq \epsilon$}
	{
		Calculate $\tilde{\tilde {r}}_{i,j}^{s}$ and $\tilde{\phi}^s$ based on Eqs. \eqref{eq_r_firstTaylor} and \eqref{eq_problemSp1_fin}\;
		Solve Problem $\overline{\mathbf{P}}_\mathbf{t1}$ to obtain the optimal trajectory $\mathbf{Q}^{s^*}$ and objective value $U^{s^*}$\;
		Update $\mathbf{Q}^{s+1}=\mathbf{Q}^{s^*}$\;
	 	Update s = s + 1\;
	}	 	 	
	\Return {$\mathbf{Q}^{t^{\prime*}}$}\;
	\caption{UAV Trajectory Control.}
        \vspace{-2pt}
	\end{algorithm}



\begin{algorithm}[!hbt]	
	\label{algo_TJCCT}	
	\SetAlgoLined
	\KwIn{$\mathcal{I}$, $\{b, \mathcal{U}\}$, $\mathcal{T}$}
	\KwOut{$U$}
	\textbf{Initialization:} $t=0$, $U=0$\;
	\While{$t\leq T$}
	{		
		Each MD processes the task locally if $U_{i,0}^t>0$ \;
		Call Algorithm \ref{algo_matching} to obtain $\Pi^{t^*}$,  $\mathbf{O}^{t^*}$, and $\mathbf{F}^{t^*}$\;
			\For{ $\Pi_j^t \in \Pi^{t^*}$}{
				Perform resource allocation and charging\;
				Update the task processing state and available resources of MEC servers\;
                Calculate the current system utility $U^t$\;
			}				
		Calculate the total system utility $U=U+U^t$\;
		\If{$t \ \% \ \Delta == 0$}{
            Call Algorithm \ref{alg_Traj} to obtain $\mathbf{Q}^{t^{\prime*}}$\;
			Update the mobility of MDs and UAVs\;
		}
		 $t = t+\delta$\;				
	}
	\Return $U$\;
	\caption{TJCCT}
\end{algorithm}

%
%

\section{Main Steps of TJCCT and Performance Analysis}
\label{sec_stability}
\par In this section, the main steps of TJCCT are given in Algorithm \ref{algo_TJCCT}. Then, the stability, optimality, and computational complexity of TJCCT are proved as follows.

\par \textit{\textbf{Stability.}} The stability of TJCCT depends on the decision of computation offloading, which relies on the result of matching. Based on Theorem \ref{theo_stable}, the stability of TJCCT is proved.

\begin{theorem}
\label{theo_stable}
The result of matching $\Pi^{t^*}$ is stable.
\end{theorem}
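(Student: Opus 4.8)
The plan is to prove stability by contradiction, after first making precise what a \emph{blocking pair} means in this many-to-one setting. Following the standard matching-theoretic notion, I would declare $\Pi^{t^*}$ stable if there is no pair $(\mathcal{K}_i^{\mathbf{t}}, j)$ with $j \notin \Pi_{\mathcal{K}_i^{\mathbf{t}}}^{t^*}$ such that (i) the task strictly prefers $j$ to its assigned server, i.e. $j \succ_{\mathcal{K}_i^{\mathbf{t}}} \Pi_{\mathcal{K}_i^{\mathbf{t}}}^{t^*}$, and (ii) the server $j$ either has spare capacity or strictly prefers $\mathcal{K}_i^{\mathbf{t}}$ to some currently matched task $\mathcal{K}' \in \Pi_j^{t^*}$, i.e. $\mathcal{K}_i^{\mathbf{t}} \succ_j \mathcal{K}'$. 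Since a task is admitted to a server's tentative list only when its preference value satisfies $V_{\mathcal{K}_i^{\mathbf{t}}, j}^t > 0$ (the positivity test in Algorithm \ref{algo_matching}), I would also record that every matched pair is individually rational, so that a genuine blocking pair must likewise yield positive utility to both sides and the deviation is well-defined.

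First I would suppose, toward a contradiction, that such a blocking pair $(\mathcal{K}_i^{\mathbf{t}}, j)$ exists. Because Algorithm \ref{algo_matching} is a task-proposing deferred-acceptance procedure in which each task traverses its preference list $\mathcal{L}_{\mathcal{K}_i^{\mathbf{t}}}^t$ in descending order and only advances to a less-preferred server after being rejected, condition (i) forces $\mathcal{K}_i^{\mathbf{t}}$ to have proposed to $j$ at some earlier iteration, since $j$ sits above its final match in its preference list. As $\mathcal{K}_i^{\mathbf{t}}$ is not matched to $j$ in $\Pi^{t^*}$, it must have been rejected by $j$ during the pruning step (lines 13 to 17), which happens only when $j$'s tentative list was saturated — with respect to both the idle-core bound $N_j^{t,\text{idl}}$ and the available-resource bound $f_j^{t,\text{avl}}$ — by tasks that are all strictly preferred to $\mathcal{K}_i^{\mathbf{t}}$.

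The key structural step I would then establish is a monotonicity (improvement) lemma: over the course of the while loop, a server never replaces a held task by a less-preferred one, so the preference-ranked quality of $\Pi_j^t$ is nondecreasing in the iteration index. Consequently the tasks finally retained in $\Pi_j^{t^*}$ are each at least as preferred as those that caused the rejection of $\mathcal{K}_i^{\mathbf{t}}$, hence all strictly preferred to $\mathcal{K}_i^{\mathbf{t}}$, and $j$ remains saturated through termination. This directly contradicts condition (ii), since $j$ can neither retain spare capacity nor prefer $\mathcal{K}_i^{\mathbf{t}}$ to any member of $\Pi_j^{t^*}$. The contradiction shows no blocking pair exists, and therefore $\Pi^{t^*}$ is stable.

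I expect the main obstacle to be the dual-capacity bookkeeping inside the monotonicity lemma: unlike the textbook Gale\textendash Shapley setting with a single integer quota, here a server's admissible set is constrained simultaneously by the number of idle CPU cores and by the heterogeneous resource demands $f_{j,i}^{\mathbf{t}^*}$ competing against $f_j^{t,\text{avl}}$. I would need to argue carefully that the ``retain the top-$N_j$ preferred tasks subject to both bounds'' rule (lines 13 and 14) still guarantees the improvement property — that admitting a more-preferred task and evicting the displaced ones cannot force the eviction of an even-more-preferred task — and that ``saturation'' remains well-defined when the two bounds bind in different combinations. Verifying that these coupled constraints preserve monotonicity rather than breaking it is the crux of the argument.
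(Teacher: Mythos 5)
The paper does not actually supply a proof of Theorem~\ref{theo_stable}; it only points to a similar argument in \cite{Wang2022a}. Your proposal is therefore the more substantive document: the blocking-pair/contradiction route via a server-side improvement lemma is exactly the canonical deferred-acceptance argument that such a citation implicitly invokes, and your reconstruction of it (task proposes down its list, rejection only under saturation by strictly preferred tasks, monotone improvement of each server's held set) is the right skeleton.

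The one point that deserves emphasis is that the obstacle you flag at the end is not merely bookkeeping --- it is where the claim can genuinely break. With a single integer quota, the improvement lemma is immediate. But under the coupled constraints of line 14 of Algorithm~\ref{algo_matching}, where admission is limited both by the idle-core count and by $\sum_{\mathcal{K}_i^{\mathbf{t}} \in \Phi(j)} f_{j,i}^{\mathbf{t}^*} \leq f_j^{t,\text{avl}}$ with heterogeneous demands $f_{j,i}^{\mathbf{t}^*}$, the ``retain the most preferred subject to the budget'' rule is a knapsack selection, and it is easy to construct instances in which a more-preferred but resource-heavy task either displaces two lighter tasks (leaving slack into which a previously rejected task would fit, creating a blocking pair with spare capacity) or is itself skipped in favor of a less-preferred task that fits (creating a blocking pair in which the server strictly prefers the rejected task to a held one). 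In such instances a matching that is stable in the strong blocking-pair sense need not exist at all. So your proof cannot be completed as stated without either (a) pinning down the selection rule and weakening the stability notion to exclude deviations that require evicting held tasks or exploiting freed slack, or (b) adding an assumption (e.g., homogeneous or unit resource demands per task, so the budget reduces to an effective quota) under which the improvement lemma holds verbatim. Since the paper asserts stability without confronting this, you have identified a real gap in the result, not just in your own argument; make the weakened definition or the extra assumption explicit and the rest of your proof goes through.
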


\begin{proof}
A similar proof can refer to \cite{Wang2022a}.
\vspace{-1pt}
\end{proof}

\par \textit{\textbf{Optimality.}} For computing resource allocation, the optimality can be easily obtained based on the theoretical result in Corollary \ref{cor_deal}. For computation offloading, the weak-Pareto optimality is proved in Theorem \ref{the_weak_pare}. For UAV trajectory control, the optimality is proved in Theorem \ref{the_opt_convex}. As a result, the optimality of TJCCT is proved.

\begin{theorem}
\label {the_weak_pare}
The result of computation offloading $\mathbf{O}^{t^*}$ is weak Pareto optimal.
\end{theorem}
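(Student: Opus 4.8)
The plan is to derive the weak Pareto optimality of $\Pi^{t^*}$ directly from its stability (Theorem~\ref{theo_stable}) through a blocking-pair contradiction. Weak Pareto optimality requires that no alternative matching make every agent strictly better off, so I would argue by contradiction: suppose there is a matching $\Pi'$ under which every task $\mathcal{K}_i^{\mathbf{t}}\in\mathcal{K}_{\text{req}}^{t}$ and every MEC server $j\in\{b,\mathcal{U}\}$ attains a strictly higher preference value than under $\Pi^{t^*}$. On the task side this means $V_{\mathcal{K}_i^{\mathbf{t}},\,\Pi'_{\mathcal{K}_i^{\mathbf{t}}}}^t > V_{\mathcal{K}_i^{\mathbf{t}},\,\Pi_{\mathcal{K}_i^{\mathbf{t}}}^{t^*}}^t$ for every task, and on the server side it means $j$ strictly prefers the set $\Pi'_j$ to $\Pi_j^{t^*}$ for every server.

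First I would pin down the preference structure. Since each server ranks task sets through the additive per-task utilities $U_{j,i}^t$ that are used to build $\mathcal{L}_j^t$ in Algorithm~\ref{algo_matching}, the servers' set-preferences are responsive and additively separable, subject to the core quota $N_j$ and the resource budget $f_j^{\max}$. This responsiveness lets me convert the set-level improvement ``$j$ prefers $\Pi'_j$ to $\Pi_j^{t^*}$'' into a statement about a single task: there must exist a task $\mathcal{K}^{\dagger}\in\Pi'_j\setminus\Pi_j^{t^*}$ that $j$ ranks strictly above the least-preferred task it currently holds in $\Pi_j^{t^*}$, or above an idle slot if $j$ sits below its quota and budget under $\Pi^{t^*}$.

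Next I would exhibit the blocking pair. Because $\mathcal{K}^{\dagger}\in\Pi'_j$ we have $\Pi'_{\mathcal{K}^{\dagger}}=j$, so the strict-improvement assumption gives $V_{\mathcal{K}^{\dagger},j}^t > V_{\mathcal{K}^{\dagger},\,\Pi_{\mathcal{K}^{\dagger}}^{t^*}}^t$, i.e.\ $\mathcal{K}^{\dagger}$ strictly prefers $j$ to its assignment $\Pi_{\mathcal{K}^{\dagger}}^{t^*}$. Combined with the server-side ranking just derived, the pair $(\mathcal{K}^{\dagger},j)$ is mutually preferred to the current assignment and remains feasible once $j$ evicts its worst held task, so it blocks $\Pi^{t^*}$. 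This contradicts the stability asserted in Theorem~\ref{theo_stable}; hence no such $\Pi'$ exists and $\Pi^{t^*}$ is weakly Pareto optimal.

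The step I expect to be the main obstacle is the feasibility bookkeeping inside the reduction from a set-level preference to a single blocking task, because the many-to-one matching is constrained simultaneously by the integer core quota $N_j$ in Constraint~\eqref{pro_c7} and the continuous resource capacity $f_j^{\max}$ in Constraint~\eqref{pro_c6}. I must verify that admitting $\mathcal{K}^{\dagger}$ while discarding $j$'s least-preferred task keeps both constraints satisfied, so that $(\mathcal{K}^{\dagger},j)$ is a genuinely realizable block and not merely a preference reversal. If instead the intended notion is the proposer-side (task-only) one, I would keep the same setup but replace the final step by the standard deferred-acceptance argument: every task is rejected by its $\Pi'$-server at some iteration of Algorithm~\ref{algo_matching}, and inspecting the chronologically last such rejection yields a task that some server both holds and prefers, which forces a rejection occurring strictly later in the run and produces the same contradiction.
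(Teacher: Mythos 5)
You should first know that the paper does not actually prove this theorem: its ``proof'' is a one-line pointer to a reference, so your write-up is already more of an argument than the paper supplies, and the stability-implies-weak-Pareto route you take is almost certainly the intended one. Your outline is the standard argument and most of it goes through: if $\Pi'$ makes \emph{every} agent strictly better off, then in particular every task strictly improves, and one only needs to produce a single feasible blocking pair to contradict Theorem~\ref{theo_stable}. One simplification you should exploit: weak Pareto optimality with respect to the tasks alone already implies weak Pareto optimality with respect to all agents (if some $\Pi'$ improved everyone strictly, it would in particular improve every task strictly), so your fallback deferred-acceptance ``last rejection'' argument is not a weaker substitute but in fact the cleaner and stronger route, and it avoids the server-side set-preference machinery entirely.

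The genuine gap is exactly where you flag it, and it is not mere bookkeeping. The matching in Algorithm~\ref{algo_matching} is constrained simultaneously by the integer core quota in \eqref{pro_c7} and the continuous knapsack-style capacity $\sum_{i} o_{i,j}^t f_{j,i}^t \le f_j^{\max}$ in \eqref{pro_c6}. Under a pure cardinality quota with responsive (here, additive nonnegative) preferences, your reduction works: writing $S=\Pi'_j\setminus\Pi_j^{t^*}$ and $R=\Pi_j^{t^*}\setminus\Pi'_j$, additivity gives $\sum_{S}U_{j,\cdot}>\sum_{R}U_{j,\cdot}$, and a short case split on $|S|\le|R|$ versus $|S|>|R|$ (using that every admitted task has $V_{j,\mathcal{K}}>0$) yields either a task in $S$ beating the worst task in $R$ or a free slot. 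But with the resource constraint, a set $\Pi'_j$ of strictly higher total utility need not contain any \emph{single} task that can feasibly replace $j$'s least-preferred held task --- the incoming task $\mathcal{K}^{\dagger}$ may demand more of $f_j^{\max}$ than the evicted task releases, so the improvement may only be realizable by a coalitional rearrangement that pairwise stability does not exclude. In such ``matching with complex constraints'' settings the implication from pairwise stability to weak Pareto optimality can genuinely fail, so a complete proof must either (i) define the blocking pair used in Theorem~\ref{theo_stable} so that a swap is only required to respect \eqref{pro_c7} and then verify \eqref{pro_c6} separately for the algorithm's output, or (ii) run the task-side deferred-acceptance counting argument and check that the notion of a server being ``full'' (which that argument relies on) is well defined when fullness can be caused by either the quota or the resource budget. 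As written, your proposal names the obstacle but does not discharge it.
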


\begin{proof}
A similar proof can refer to \cite{zhou2017social}.
\vspace{-1pt}
\end{proof}

\begin{theorem}
\label {the_opt_convex}
 Problem $\overline{\mathbf{P}}_\mathbf{t}$ does not change the optimality of Problem $\mathbf{P_t}$.
 \vspace{-1pt}
\end{theorem}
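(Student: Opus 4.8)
The plan is to prove that $\mathbf{P_t}$ and $\overline{\mathbf{P}}_\mathbf{t}$ share the same optimal solution set and the same optimal value up to an additive constant, so that any maximizer of the reduced problem is also a maximizer of the original. The essential point is that $\mathbf{P_t}$ optimizes \emph{only} the UAV positions $\mathbf{Q}^{t^\prime}$, while the short-timescale variables $\mathbf{O}^{t^*}$ and $\mathbf{F}^{t^*}$ are already fixed. First I would substitute \eqref{eq_MD_utility} and \eqref{eq_MEC_utility} into $U^t=\sum_{i\in\mathcal{I}}\sum_{j\in\mathcal{U}}\zeta_i^t o_{i,j}^t(U_{i,j}^t+U_{j,i}^t)$ and isolate the dependence on $\mathbf{Q}^{t^\prime}$. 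Inspecting the terms shows that the position affects the objective through exactly two channels: the uplink rate $r_{i,j}^t$ of \eqref{eq_dataRate}, which depends on $\mathbf{q}_j^{t^\prime}$ only via the distance $\|\mathbf{q}_i^t-\mathbf{q}_j^{t^\prime}\|$, and the UAV propulsion energy in \eqref{eq_energyMecComp1B}, which depends on the displacement $\|\mathbf{q}_j^{t^\prime}-\mathbf{q}_j^t\|$. All remaining quantities -- $f_{j,i}^t$, $p_{j,i}^t$, the computation energy $\gamma_j(f_{j,i}^t)^2\mu_i^t$, and the task parameters $l_i^t,\mu_i^t,\tau_i^t$ -- are frozen by the fixed short-timescale strategy.

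Next I would write $U_{i,j}^t+U_{j,i}^t=\Phi_{i,j}(\mathbf{Q}^{t^\prime})+C_{i,j}$, where $\Phi_{i,j}$ gathers every term containing $r_{i,j}^t$ or the propulsion energy -- namely the satisfaction logarithm (which, being nonlinear, must retain the constant $\vartheta_1=\mu_i^t/f_{j,i}^t$ \emph{inside} its argument alongside the position-dependent $l_i^t/r_{i,j}^t$), the transmission-energy cost $P_i^t l_i^t/r_{i,j}^t$, and the flight-energy cost -- while $C_{i,j}$ collects the purely additive, position-free terms, i.e.\ the payment cost, the MEC reward, and the computation-energy cost. Because $\arg\max_{\mathbf{Q}^{t^\prime}}\sum_{i,j}\zeta_i^t o_{i,j}^t(\Phi_{i,j}+C_{i,j})=\arg\max_{\mathbf{Q}^{t^\prime}}\sum_{i,j}\zeta_i^t o_{i,j}^t\Phi_{i,j}$ for any position-free constant $\sum_{i,j}\zeta_i^t o_{i,j}^t C_{i,j}$, discarding these constants leaves the maximizer untouched; this is precisely the ``elimination of unrelated terms'' invoked in Lemma \ref{lem_appro_p1}. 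I would then check term-by-term that $\sum_{i,j}\zeta_i^t o_{i,j}^t\Phi_{i,j}$, once $r_{i,j}^t$ is rewritten as $\overline{r}_{i,j}^{t^\prime}$, is exactly the objective of $\overline{\mathbf{P}}_\mathbf{t}$, and that the feasible sets coincide because both problems are constrained only by the identical mobility conditions \eqref{eq_UAV_mob_ini_fin}--\eqref{eq_UAV_mob_dis_each1}. Equality of the objectives up to a constant over an identical feasible region yields the claimed preservation of optimality.

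The step I expect to be the main obstacle is justifying that replacing $r_{i,j}^t$ by $\overline{r}_{i,j}^{t^\prime}$ is \emph{exact} on the position-dependent part, rather than a genuine approximation that could shift the maximizer. The surrogate rate collapses the probabilistic-LoS gain of Section \ref{sec_communicationModel} into a single deterministic large-scale term $\bar{g}_{i,j}^t(\|\mathbf{q}_j^{t^\prime}-\mathbf{q}_i^t\|^2+H^2)^{-\beta_A/2}$. To keep the theorem faithful I would argue that, within one epoch, the small-scale fading and the LoS indicator enter only through their fixed expectations -- absorbed into $\bar{g}_{i,j}^t$ -- so that the sole surviving functional dependence on $\mathbf{q}_j^{t^\prime}$ is the distance factor already carried by $\overline{r}_{i,j}^{t^\prime}$; under this modeling convention $\mathbf{P_t}$ and $\overline{\mathbf{P}}_\mathbf{t}$ differ by nothing more than the constant $C_{i,j}$ and the reduction is exact. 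Should one instead regard the channel simplification as a strict approximation, the honest conclusion weakens to preservation of the \emph{stationary points}: since $\mathbf{q}_j^{t^\prime}$ enters both objectives through the same monotone function of distance, the ordering of feasible points -- and hence the location of the optimum -- is unchanged even if the optimal value is only approximated.
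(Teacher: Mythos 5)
The paper itself gives no proof of Theorem~\ref{the_opt_convex} (it is omitted ``due to the page limitation''), so there is nothing to compare line by line; your reconstruction via the decomposition $U_{i,j}^t+U_{j,i}^t=\Phi_{i,j}(\mathbf{Q}^{t^\prime})+C_{i,j}$ is the natural route, and its core is sound: the payment, reward, and computation-energy terms are indeed frozen by $(\mathbf{O}^{t^*},\mathbf{F}^{t^*})$, adding a position-free constant to the objective cannot move the maximizer, and both problems share the feasible set \eqref{eq_UAV_mob_ini_fin}--\eqref{eq_UAV_mob_dis_each1}. This correctly disposes of the ``elimination of unrelated terms'' half of the transformation in Lemma~\ref{lem_appro_p1}.

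The gap lies in your treatment of the step you yourself single out as the obstacle, namely replacing $r_{i,j}^t$ by $\overline{r}_{i,j}^{t^\prime}$. Two issues. First, even under the convention that the small-scale fading and the LoS indicator are replaced by their expectations, $\mathbb{E}\bigl[\log_2(1+Pg/N_0)\bigr]\neq\log_2\bigl(1+P\,\mathbb{E}[g]/N_0\bigr)$ by Jensen's inequality, and the expected gain $\mathbb{P}_{i,j}^t g_{i,j}^{t,\mathrm{L}}+(1-\mathbb{P}_{i,j}^t)g_{i,j}^{t,\mathrm{N}}$ is not of the form $\bar{g}_{i,j}^t(\|\mathbf{q}_j^{t^\prime}-\mathbf{q}_i^t\|^2+H^2)^{-\beta_A/2}$, because $\mathbb{P}_{i,j}^t$ is itself a non-power-law function of the elevation angle. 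The reduction is therefore exact only if one \emph{defines} $\mathbf{P_t}$ with the averaged path-loss channel from the outset; that is the assumption the theorem tacitly needs, and you should state it as a hypothesis rather than try to derive it. Second, your fallback argument --- that because $\mathbf{q}_j^{t^\prime}$ enters both objectives ``through the same monotone function of distance'' the ordering of feasible points, and hence the optimum, is preserved --- is false for a sum: a monotone transformation preserves the argmax only when applied to the entire objective, not summand by summand. Replacing each term $\varphi(d_{i,j})$ by a different monotone $\tilde{\varphi}(d_{i,j})$ alters the trade-off between the different MDs $i$ served by the same UAV $j$ (and between the rate terms and the propulsion-energy term, which depends on a different distance $\|\mathbf{q}_j^{t^\prime}-\mathbf{q}_j^t\|$), and can move the optimal position. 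So the honest conclusion is either (i) exactness by modeling convention, or (ii) $\overline{\mathbf{P}}_\mathbf{t}$ is a genuine approximation whose gap must be bounded separately; the stationary-point-preservation claim in your last sentence does not follow as stated.
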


\begin{proof}
We omit the proof due to the page limitation. 
\vspace{-1pt}
\end{proof}

\par \textit{\textbf{Computation Complexity.}} The computational complexity of TJCCT is given as Theorem \ref{theo_complexity}.

\begin{theorem}
\label{theo_complexity}
TJCCT has a polynomial worst-case complexity in each time slot, i.e., $\mathcal{O}\big(\iota^{\max}\big(|\mathcal{U}|+1\big)\big(2|\mathcal{K}_{\text{req}}^{t}|+ \min\{|\mathcal{U}|+1, |\mathcal{K}_{\text{req}}^{t}|\}\big)+({IU})^{3.5}\big.$ $\big.\log_2(1/\epsilon)\big)$, where $|\mathcal{K}_{\text{req}}^{t}|$ and $|\mathcal{U}|$ are the numbers of undecided tasks and UAVs, respectively.
\vspace{-1pt}
\end{theorem}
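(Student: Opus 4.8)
The plan is to establish the polynomial complexity of TJCCT by analyzing its three constituent procedures separately and then combining their costs, since the overall algorithm (Algorithm~\ref{algo_TJCCT}) invokes the matching procedure (Algorithm~\ref{algo_matching}) in each short time slot and the trajectory-control procedure (Algorithm~\ref{alg_Traj}) once per time epoch. First I would bound the cost of the bargaining-based computing resource allocation (Algorithm~\ref{algo_allo_price}): its \textbf{while} loop runs for at most $\iota^{\max}$ iterations, and each iteration performs only closed-form updates of the price via Eq.~\eqref{eq_optAlloPrice} and of the allocation via Eq.~\eqref{eq_optAllo}, so a single MD--server negotiation costs $\mathcal{O}(\iota^{\max})$.

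Next I would account for the matching procedure. The preference-list construction (lines 2--8 of Algorithm~\ref{algo_matching}) calls Algorithm~\ref{algo_allo_price} once for every task--server pair, giving $\mathcal{O}\big(\iota^{\max}(|\mathcal{U}|+1)|\mathcal{K}_{\text{req}}^{t}|\big)$, where the $(|\mathcal{U}|+1)$ factor counts the $|\mathcal{U}|$ UAVs plus the single MBS. The deferred-acceptance-style matching loop (lines 9--18) has each rejected task propose down its preference list at most once per server, and the per-server quota enforcement (keeping the top-$N_j$ tasks) contributes the $\min\{|\mathcal{U}|+1,|\mathcal{K}_{\text{req}}^{t}|\}$ term through the number of simultaneous proposals a server sorts. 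Aggregating the proposal and pruning work across the whole run yields $\mathcal{O}\big(\iota^{\max}(|\mathcal{U}|+1)(2|\mathcal{K}_{\text{req}}^{t}|+\min\{|\mathcal{U}|+1,|\mathcal{K}_{\text{req}}^{t}|\})\big)$, which matches the first summand in the claim.

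Then I would bound the trajectory-control procedure. Algorithm~\ref{alg_Traj} iterates successive convex approximation until the objective gap drops below $\epsilon$; since each surrogate problem $\overline{\mathbf{P}}_\mathbf{t1}$ is convex and the objective improves monotonically, the number of outer iterations needed to reach accuracy $\epsilon$ is $\mathcal{O}(\log_2(1/\epsilon))$. Each such iteration solves a convex program over the $2IU$ position variables (two coordinates for each of the $U$ UAVs across the relevant decision set), and invoking an interior-point solver on a problem of this dimension incurs the standard $\mathcal{O}((IU)^{3.5})$ worst-case cost. Multiplying gives the $\mathcal{O}\big((IU)^{3.5}\log_2(1/\epsilon)\big)$ second summand.

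Finally I would combine the two dominant contributions additively, observing that within any single short slot at most one matching pass and (when the epoch boundary is reached) one trajectory-control pass are executed, so the per-slot worst-case cost is the sum of the two bounds, which is the stated expression; every factor is a polynomial in the input sizes $I$, $U$, $|\mathcal{K}_{\text{req}}^{t}|$ and in $\iota^{\max}$, $\log_2(1/\epsilon)$, establishing polynomiality. The main obstacle I anticipate is the tight accounting of the matching loop---specifically justifying that the total number of proposals and the repeated top-$N_j$ pruning across all \textbf{while}-loop iterations collapse to the compact $2|\mathcal{K}_{\text{req}}^{t}|+\min\{|\mathcal{U}|+1,|\mathcal{K}_{\text{req}}^{t}|\}$ factor rather than a naive product of loop bounds, since a loose analysis would overcount the per-server sorting and the rejected-set reinsertions.
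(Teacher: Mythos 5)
The paper itself omits the proof of Theorem~\ref{theo_complexity} (``due to the page limitation''), so there is no authorial argument to compare against; your decomposition --- per-negotiation cost of Algorithm~\ref{algo_allo_price}, preference construction plus deferred-acceptance accounting for Algorithm~\ref{algo_matching}, and interior-point-times-SCA-iterations for Algorithm~\ref{alg_Traj}, summed per slot --- is the natural reading of the stated bound and is essentially sound. Two small imprecisions are worth tightening. First, the $(IU)^{3.5}$ factor is not explained by ``$2IU$ position variables'': the trajectory variables in $\overline{\mathbf{P}}_\mathbf{t1}$ number only $2|\mathcal{U}|$ per epoch; the $IU$ scaling of the problem dimension comes from the auxiliary rate variables $\overline{r}_{i,j}^{t^\prime}$ (one per MD--UAV pair) introduced in Lemma~\ref{lemma_r_convex}, and you should attribute it there. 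Second, the $\log_2(1/\epsilon)$ outer-iteration count does not follow merely from monotone improvement of a convex surrogate; successive convex approximation only guarantees convergence to a stationary point, and a logarithmic iteration bound requires an additional (here unstated) linear-convergence assumption. Since that factor appears in the theorem statement itself, this is a weakness you inherit rather than introduce, but your proof should flag it as an assumption rather than present it as a consequence of convexity. With those two clarifications your argument supports the claimed bound as an upper estimate, and your closing concern about not overcounting the proposal/pruning work in the matching loop is the right place to spend care: the key observation is that each task proposes to each server at most once over the entire \textbf{while} loop, so proposals and rejections are bounded by $(|\mathcal{U}|+1)|\mathcal{K}_{\text{req}}^{t}|$ in aggregate rather than per iteration.
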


\begin{proof}
	We omit the proof due to the page limitation.
\end{proof}

\section{Simulation Results and Analysis}
\label{sec_simulation}

\par In this section, simulation results are presented to validate the effectiveness of the proposed approach.

\subsection{Simulation Setup}
\label{simulation_set_up}

\par \textbf{Scenarios.} We consider a UAV-assisted MEC system where 1 terrestrial MEC server and 2 aerial MEC servers are deployed to provide service for 20 MDs in a $500\times 500 \ \text{m}^2$ rectangular area. The time horizon is set as $50$ s, and is divided into $T = 500$ time slots with equal length $\delta=100$ ms, and 10 time slots are grouped into a time epoch, i.e., $\Delta=1$ s. 

\par \textbf{Parameters.} For MDs, the related parameters are set as : computing resources $f_i^{\max}=[0.5,1]$ GHz and transmit power $p_i^t=[10,30]$ dBm. For tasks, the related parameters are set as follows: task size $l_i^t\in [1,5]$ Mb \cite{Yu2020}, required computing resources $\mu_i^t=[500, 1500]$ cycles/bit, and deadline $\tau_i^{\max}=[0.5, 5]$ s. For the terrestrial MEC server $j=b$, the related parameters are set as follows: the fixed location $[250,250]$, the computation resources $f_j^{\max} \in [20,40]$ GHz, and the number of CPU cores $N_j^{\text{core}}\in[4,8]$. For aerial MEC server $j\in \mathcal{U}$, the related parameters are set as follows: fixed altitude $H$ = 100 m, maximum velocity $v_\text{U}^{\max}=25$ m/s, initial positions and destinations $(\textbf{q}_{1}^{I},\textbf{q}_{1}^{F})=([0,0],[500,0])$, $(\textbf{q}_{1}^{I},\textbf{q}_{1}^{F})=([500,0],[0,0])$, computation resources $f_j^{\max} \in [10,20]$ GHz \cite{Zhou2022} and the number of CPU cores $N_j^{\text{core}}\in[2,4]$. The setting of communication-related parameters follows the 3GPP specification \cite{3GPPTR389012020}.

\par \textbf{Benchmarks.} This work evaluates the proposed TJCCT in comparison with the following benchmark algorithms. \textit{Local-only strategy (LS)}: all MDs process the tasks locally. \textit{Greedy strategy (GS)}: each MD greedily acquires the optimal strategies for task offloading and computing resource allocation to maximize its utility. \textit{Nearest strategy (NS)}: each MD offloads the tasks to the nearest MEC server to which it is attached. \textit{Game theoretic strategy (GTS)}: this algorithm is extended from \cite{Wang2020} and adjusted accordingly to suit for this work. \textit{Cooperative scheme (CS)} \cite{Wang2022a}: this UAV-UGV cooperative computation offloading scheme is extended to be suited to the problem in this work. Specifically, the strategies of computation offloading and UAV trajectory control are based on the one-to-one matching mechanism and the segment-constrained method, respectively. Note that the strategy of trajectory control for LS, GS, NS, and GTS is determined based on the method proposed in this work.

\textbf{Performance Indicators.} To evaluate the overall performance of the proposed method, we adopt the total utility of the system, i.e., $ \sum_{t\in\mathcal{T}} U^t$, the aggregate QoE of MDs $\sum_{t\in\mathcal{T}}\sum_{i \in \mathcal{I}}\sum_{n \in \mathcal{N}}U_{i,n}^t$, and the total revenue of MEC servers, i.e., $\sum_{t\in\mathcal{T}}\sum_{j \in \{b,\mathcal{U}\}}\sum_{i \in \mathcal{I}}U_{j,i}^t$. 


%
%

\subsection{Numerical Results}

\par \textbf{\textit{Effect of Time.}} Figs. \ref{fig_sw}(a), \ref{fig_sw}(b), and \ref{fig_sw}(c) compare the total utility of the system, the aggregate QoE of MDs, and the total revenue of MEC servers, respectively with respect to time. It can be observed that TJCCT outperforms the other algorithms in terms of the total utility of the system, the aggregate QoE of MDs, and the total revenue of MEC servers, with significant performance advantages as time elapses. This can be attributed to several reasons. First, price-incentive resource trading stimulates the MDs and MEC servers to negotiate the on-demand computing resource allocation. Additionally, the matching mechanism employed by TJCCT enables mutually satisfactory computation offloading between MDs and terrestrial-aerial MEC servers based on the available computing resources of MEC servers and the QoE requirements of MDs. Moreover, the UAVs dynamically adjust their trajectories to provide real-time offloading services for MDs by using the trajectory control method of TJCCT. In conclusion, this set of simulation results demonstrates the superiority of TJCCT among the six algorithms, especially in bringing long-time benefits for both MDs and MEC servers.

\begin{figure*}[!hbt] 
	\centering
		\setlength{\abovecaptionskip}{1pt}%
		\setlength{\belowcaptionskip}{1pt}%
	\subfigure[The total utility of the system]
	{
		\begin{minipage}[t]{0.31\linewidth}
			\raggedleft
			\includegraphics[scale = 0.22]{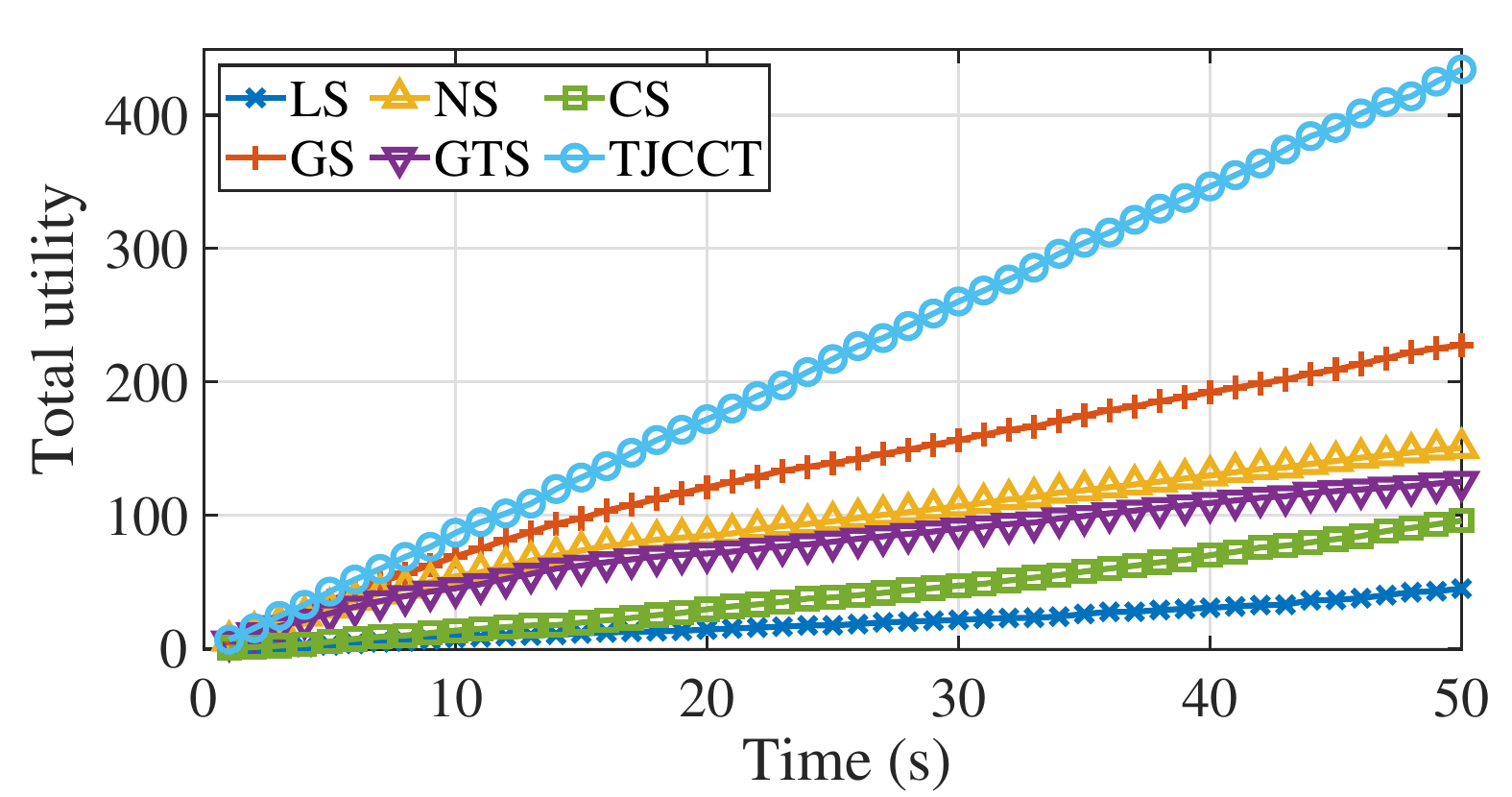}
		\end{minipage}
  \vspace{-1pt}
	}
	\subfigure[The aggregate QoE of MDs]
	{
		\begin{minipage}[t]{0.31\linewidth}
			\centering
			\includegraphics[scale = 0.22]{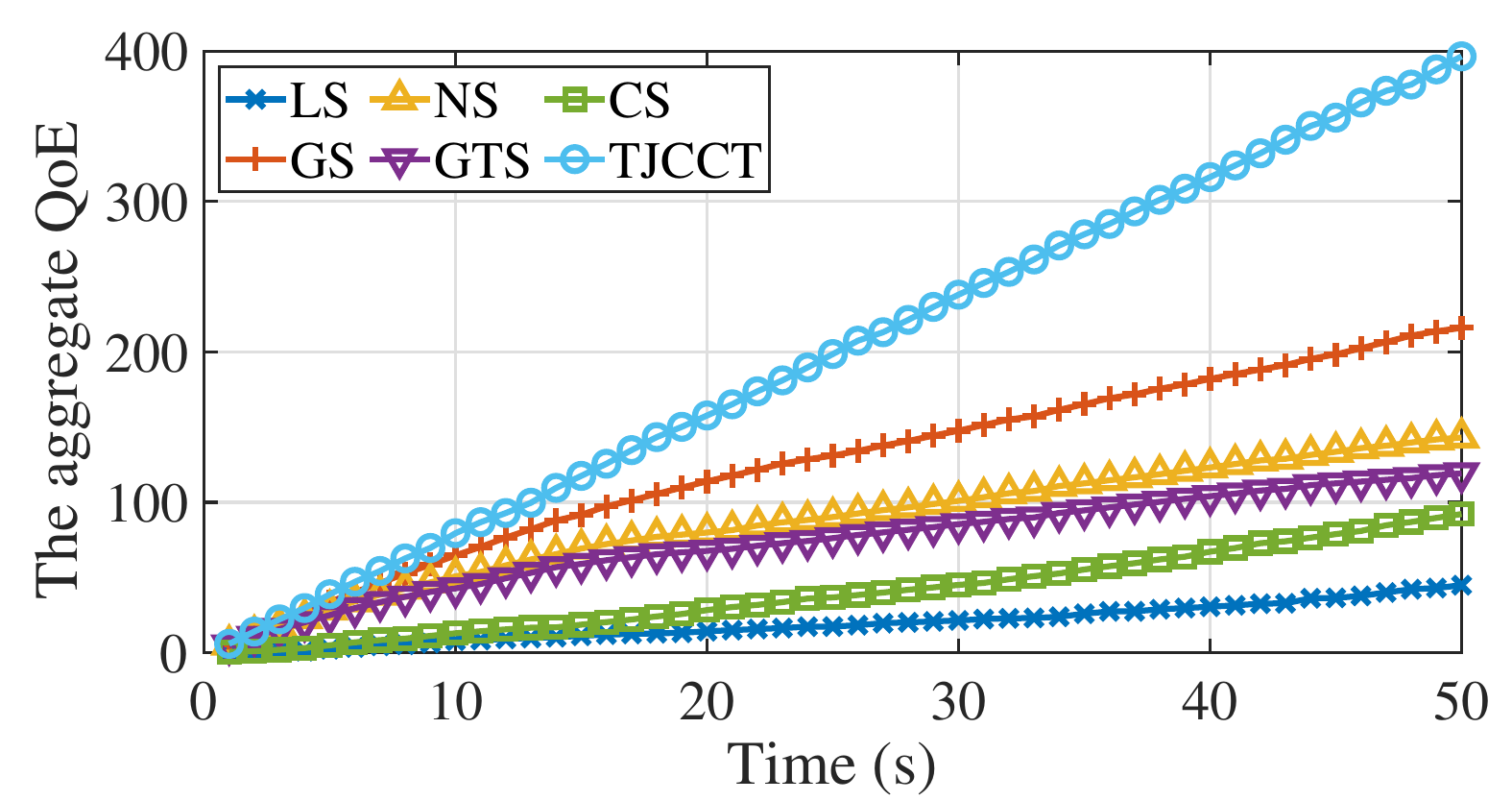}	
		\end{minipage}
  \vspace{-1pt}
	}
	\subfigure[The total revenue of MEC servers]
	{
		\begin{minipage}[t]{0.31\linewidth}
			\raggedright
			\includegraphics[scale = 0.22]{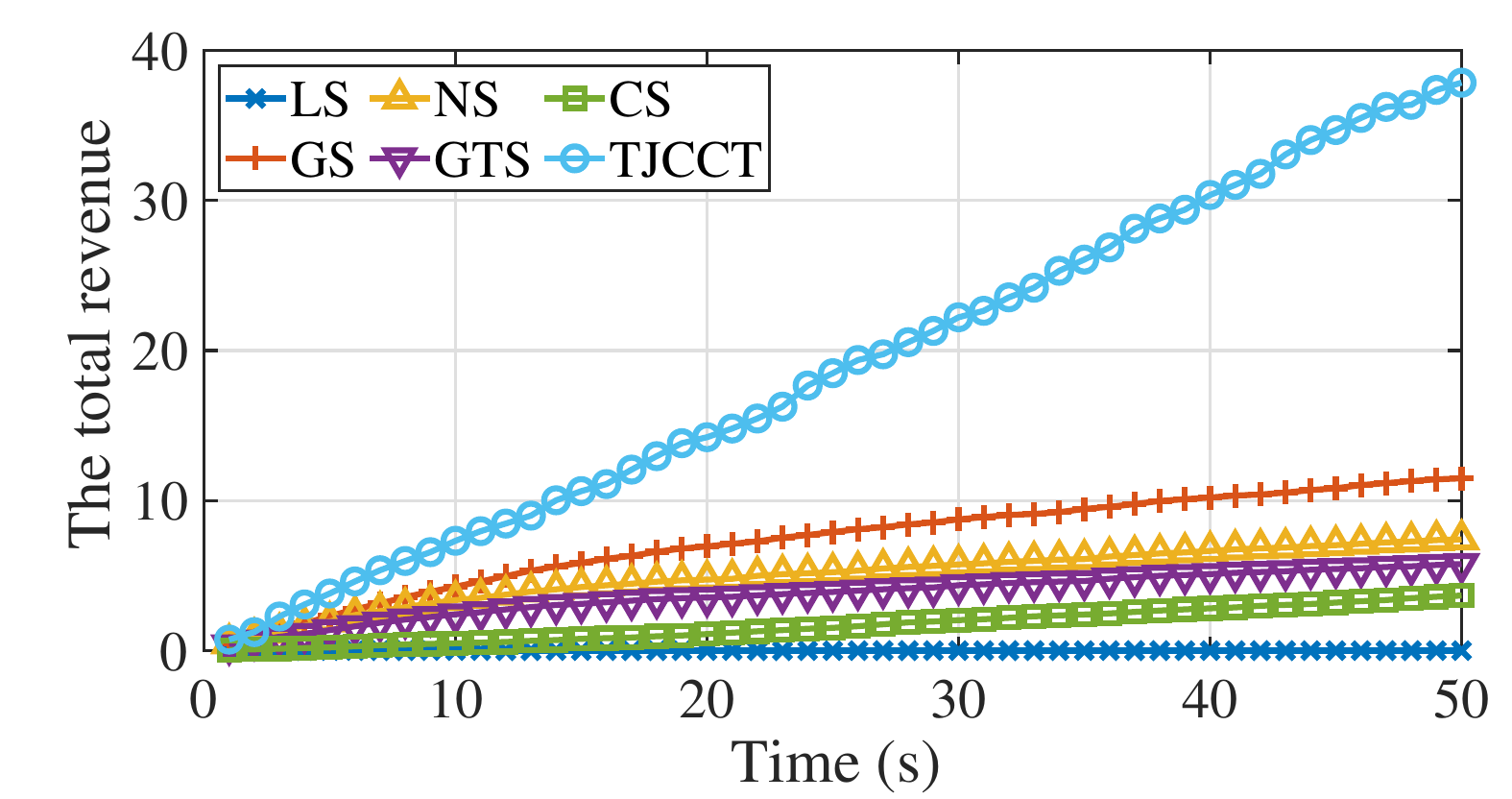}
		\end{minipage}
            \vspace{-1pt}
	}
	\centering
 \vspace{-1pt}
	\caption{{Effect of time. (a) The total utility of the system. (b) The aggregate QoE of MDs. (c) The total revenue of MEC servers.}}
	\label{fig_sw}
\end{figure*}

\begin{figure*}[!hbt] 
	\centering
		\setlength{\abovecaptionskip}{1pt}%
		\setlength{\belowcaptionskip}{1pt}%
	\subfigure[The total utility of the system]
	{
		\begin{minipage}[t]{0.31\linewidth}
			\raggedleft
			\includegraphics[scale=0.22]{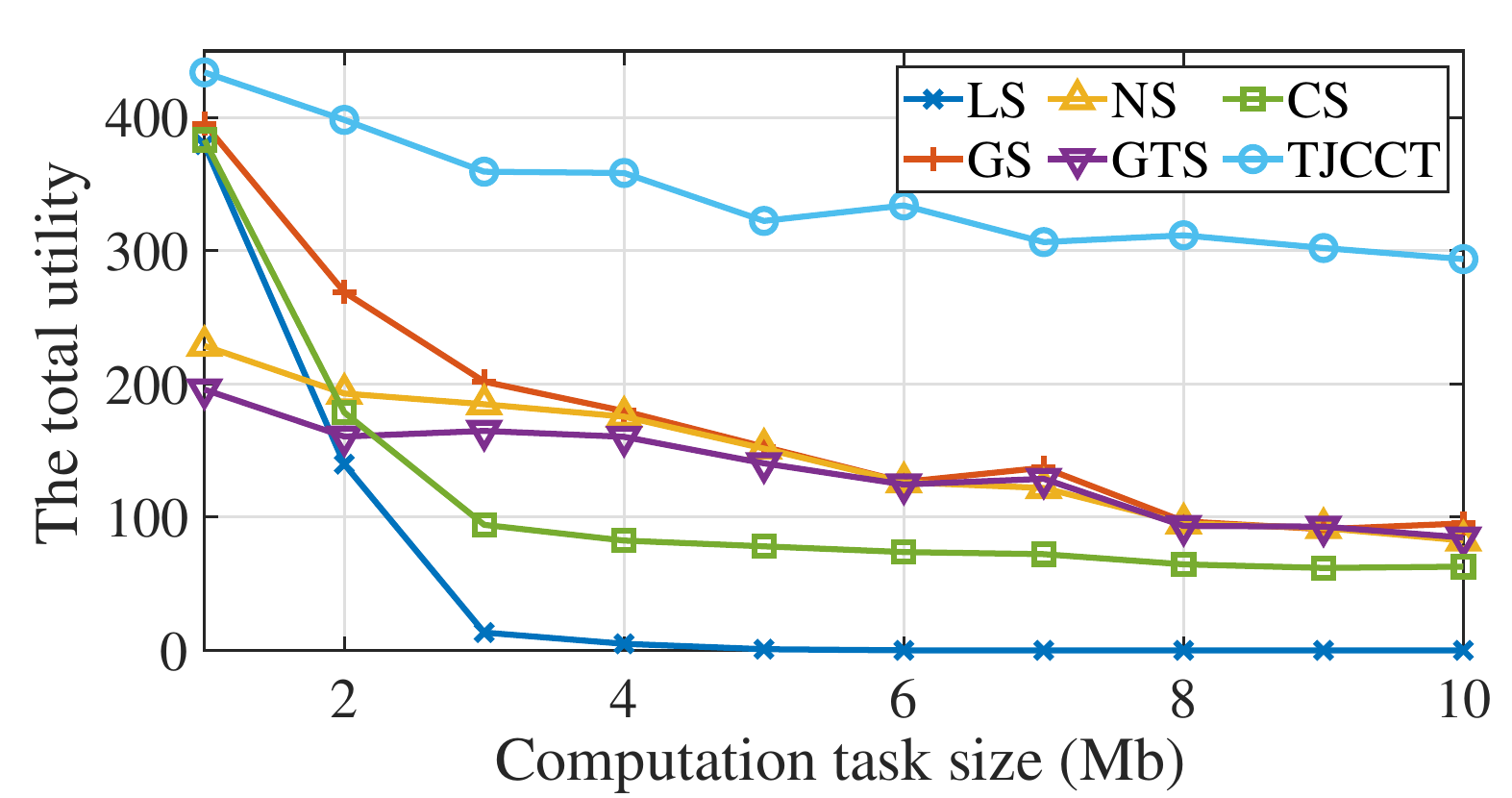}
		\end{minipage}
	}
	\subfigure[The aggregate QoE of MDs]
	{
		\begin{minipage}[t]{0.31\linewidth}
			\centering
			\includegraphics[scale=0.22]{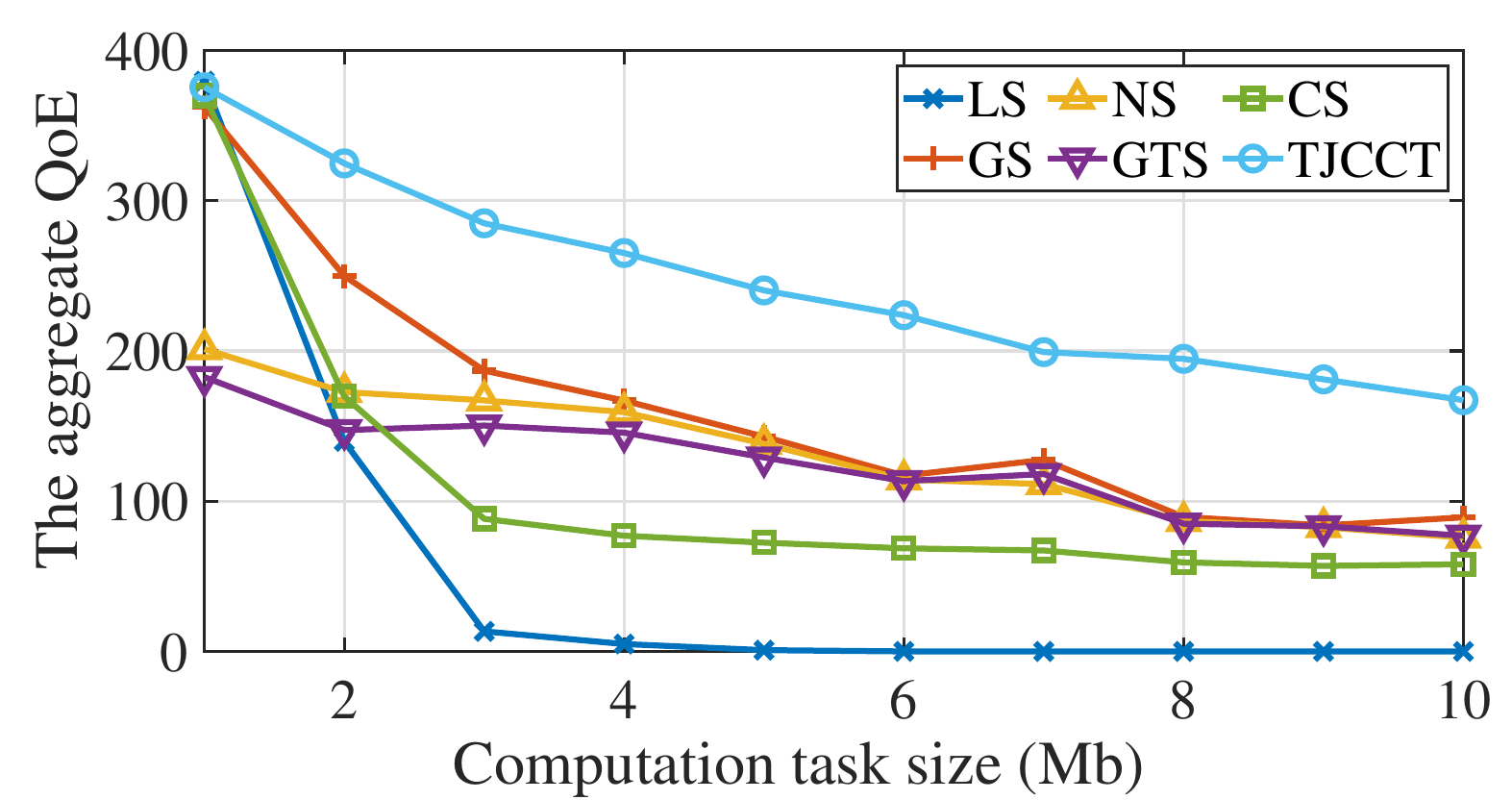}	
		\end{minipage}
	}
	\subfigure[The total revenue of MEC servers]
	{
		\begin{minipage}[t]{0.31\linewidth}
			\raggedright
			\includegraphics[scale=0.22]{utility-mec-eps-converted-to.pdf}
		\end{minipage}
	}
	\centering
	\caption{{Effect of the average computation size. (a) The total utility of the system. (b) The aggregate QoE of MDs. (c) The total revenue of MEC servers.}}
	\label{fig_task}
	\vspace{-1pt}
\end{figure*}

\par  \textbf{\textit{Effect of Average Computation Size.}} Figs. \ref{fig_task}(a), \ref{fig_task}(b), and \ref{fig_task}(c) show the impact of the average computation size on the total utility of the system, the aggregate QoE of MDs, and the total revenue of MEC servers, respectively with respect to the average computation size. From Figs. \ref{fig_task}(a) and \ref{fig_task}(b), it can be observed that TJCCT outperforms the comparative algorithms in the utility of the system and aggregate QoE of MDs, exhibiting a relatively gradual downward trend among the six algorithms with the increasing computation size. From Fig. \ref{fig_task}(c), it can be observed that TJCCT initially shows a slight upward trend and then approximately stabilizes as the workload increases, consistently maintaining a significantly superior level among the six schemes. The main reasons are as follows. On the one hand, due to the spatiotemporal-uneven distributed computation tasks, certain MEC servers could be rapidly overloaded by increasingly heavier workloads while that of the others remain idle. On the other hand, TJCCT shows superior capabilities to alleviate the possible congestion among MEC servers through the many-to-one matching scheme, to improve resource utilization through the on-demand computing resource allocation, and to provide dynamic offloading service to satisfy the QoE for MDs through trajectory control. Consequently, this set of results indicates that TJCCT outperforms the other algorithms in both light-loaded and heavy-loaded scenarios.


%
%

\section{Conclusion}
\label{sec_conclusion}

\par In this work, we study computing resource allocation, computation offloading, and UAV trajectory control for UAV-assisted MEC system. First, we employ a hierarchical framework to coordinate the collaboration among MDs, terrestrial edge, aerial edge, and the controller. Furthermore, we formulate an optimization problem to maximize the system utility. To solve the MINLP problem, we propose the TJCCT which consists of two time-scale optimization methods. In the short time scale, we propose a price-incentive method for computing resource allocation and a matching mechanism-based method for computation offloading. In the long time scale, we propose a convex optimization-based method for UAV trajectory control. Besides, the stability, optimality, and polynomial complexity of TJCCT are proved. Simulation results demonstrate that TJCCT outperforms the comparative algorithms is able to achieve long-term benefits in terms of the system utility, the QoE of MDs, and the revenue of MEC servers in both light-loaded and heavy-loaded scenarios.

\newpage
\bibliographystyle{IEEEtran}
\bibliography{references_1.bib}

\end{document}